\numberwithin{equation}{section}
\newcommand\norm[1]{\left\lVert#1\right\rVert}
\title{Unitary induced channels and Tsirelson's problem} 
\author{Micha{\l} Banacki}
\address{International Centre for Theory of Quantum Technologies (ICTQT), University of Gda\'{n}sk, Jana Ba\.{z}y\'{n}skiego 1A, 80-309 Gda\'{n}sk, Poland}
\address{Institute of Mathematics, Faculty of Mathematics, Physics and Informatics, University of Gda\'{n}sk, Wita Stwosza 57, 80-308, Gda\'{n}sk, Poland}
\email{michal.banacki@ug.edu.pl}
\author{Paweł Horodecki}
\address{International Centre for Theory of Quantum Technologies (ICTQT), University of Gda\'{n}sk, Jana Ba\.{z}y\'{n}skiego 1A, 80-309 Gda\'{n}sk, Poland}
\address{Faculty of Applied Physics and Mathematics, Gda\'{n}sk University of Technology, Gabriela Narutowicza 11/12, 80-233 Gda\'{n}sk, Poland} 
\email{pawel.horodecki@ug.edu.pl}
\theoremstyle{plain}
\newtheorem{theorem}{Theorem}[section]
\newtheorem{corollary}[theorem]{Corollary}
\theoremstyle{definition}
\newtheorem{definition}[theorem]{Definition} 
\theoremstyle{remark}
\newtheorem{remark}[theorem]{Remark}
\newcommand{\beq}{\begin{equation}}
\newcommand{\eeq}{\end{equation}}
\newcommand{\be}{\begin{eqnarray}}
\newcommand{\ee}{\end{eqnarray}}
\newcommand{\beg}{\begin{eqnarray*}}
\newcommand{\eeg}{\end{eqnarray*}}
\begin{document}
\maketitle

\begin{abstract}Motivated by a recent progress concerning quantum commuting and quantum tensor models of composed systems we investigate a notion of (generalized) unitary induced quantum channel. Using properties of Brown algebras we provide an equivalent characterization of discussed families in both tensor and commuting paradigms. In particular, we provide an equivalent formulation of Tsirelson's conjecture (Connes' embedding problem) in terms of considered paradigms based on protocols which do not require measurements performed on infinite-dimensional subsystems. As a result we show that there is a difference between quantum commuting and quantum tensor models for generalized unitary induced channels.
\end{abstract}
\section{Preliminaries on quantum commuting and quantum tensor models}

In full generality quantum theory allows for two (possibly different) descriptions of composed systems consisting of two (separated) parts \cite{Boris, Scholz}. In the typical approach the whole system is represented by a tensor products of Hilbert spaces $H_A\otimes H_B$ while each subsystem corresponds to either $H_A$ or $H_B$ and local observables are realized as operators acting on $H_A$ and $H_B$ respectively. In the second and less prevalent approach inspired by algebraic quantum field theory \cite{Haag, Haag0}, the whole system corresponds to some Hilbert space $H$, while local observables for both subsystems are distinguished as families of commuting operators acting on $H$.

Specifically, for the bipartite Bell-type scenario (with $m\geq 2$ measurement settings and $n\geq 2$ possible outcomes per setting), depending on allowed paradigms we can consider different sets of possible behaviours $P=\left\{p(ab|xy)\right\}_{a,b,x,y}$ that provide probabilistic description of this measurement scheme. Quantum commuting correlation ($P\in \mathcal{C}_{qc}(m,k)$) are defined by the existence of a Hilbert space $H$, a unit vector $|\psi\rangle\in H$ and PVMs (POVMs) elements $P_{a|x},Q_{b|y}\in B(H)$ such that $ [P_{a|x},Q_{b|y}]=0$ and $p(ab|xy)=\langle\psi|P_{a|x}Q_{b|y}|\psi\rangle$, while quantum correlations ($P\in \mathcal{C}_{q}(m,k)$) are characterized by the existence of Hilbert spaces $H_A, H_B$ with $\dim H_A,\dim H_B<\infty$, a unit vector $|\psi\rangle\in H_A\otimes H_B$ and PVMs (POVMs) elements $P_{a|x}\in B(H_A)$, $Q_{b|y}\in B(H_B)$ such that $p(ab|xy)=\langle\psi|P_{a|x}\otimes Q_{b|y}|\psi\rangle$. With the set of quantum spatial correlations $\mathcal{C}_{qs}(m,k)$ described by a relaxation of previous assumption of finite-dimensionality, one obtains the following chain of (in general nontrivial - see \cite{Slofstra1, Slofstra2, Col, Col2, Prakash, Re}) inclusions
\begin{equation}\label{hier}
\mathcal{C}_q(m,n)\subset \mathcal{C}_{qs}(m,n) \subset\overline{\mathcal{C}_{q}(m,n)}=\mathcal{C}_{qa}(m,n)\subset 
\mathcal{C}_{qc}(m,n).
\end{equation}

It is convenient to compare sets of correlations obtainable in quantum commuting an quantum tensor frameworks using the notion of universal algebras of PVMs (compare \cite{Blackadar,Blackadar2} for the concept of universal C*-algebras defined by generators and relations).

\begin{definition}\label{uni}Universal algebra of PVMs $C^*(m,n)$ is a universal (unital) C*-algebra generated by a collection of elements $\left\{P_{a|x}\right\}_{a,x=1}^{n,m}$ 
forming $m$ projection-valued measures with $n$ outcomes, i.e.
\begin{equation}\nonumber
C^*(m,n)=C^*\Biggl \langle\left\{P_{a|x}\right\}_{a,x=1}^{n,m}: \sum_{a=1}^nP_{a|x}=\mathds{1}, P_{a|x}=P_{a|x}^*=P_{a|x}^2\biggr \rangle.
\end{equation}
\end{definition}
To describe a measurement scenario for a bipartite system consider two copies of universal algebra of PVMs $C_A^*(m,n)$ and $C_B^*(m,n)$ generated (as in Definition \ref{uni}) respectively by PVMs elements $P_{a|x}$ and $Q_{b|y}$. Observe that a behaviour $P=\left\{p(ab|xy)\right\}_{a,b,x,y}$ fulfills $P\in \mathcal{C}_{qc}(m,n)$ (respectively $P\in \mathcal{C}_{qa}(m,n)$) if and only if there exists a state $\phi\in S(C_A^*(m,n)\otimes_{max} C_B^*(m,n))$ (respectively $\phi\in S(C_A^*(m,n)\otimes_{min} C_B^*(m,n))$) such that $p(ab|xy)=\phi(P_{a|x}\otimes Q_{b|y})$.

Celebrated Tsirelson's problem \cite{Boris, Scholz} conjectures that $\mathcal{C}_{qa}(m,n)=\mathcal{C}_{qc}(m,n) $ for all scenarios with $m\geq 2$ and $n\geq 2$. This particular conjecture has been recently refuted \cite{Re} giving a negative answer to many long-standing open problem in theory of operator algebras (see \cite{Kirchberg, Fritz, Junge, Ozawa, Connes, Musat, Dykema, Harris1, Harris2, Kavruk} for various equivalent formulation of Tsirelson's conjecture).

Despite this remarkable result not all of its consequences (both physical and purely mathematical) have been thoroughly investigated. Specifically, construction of particular correlations (realizable in a physically justifiable model) within commuting paradigm yet beyond the closure of the set of quantum correlations remains an important problem. Similarly, beyond the strict setting of Bell inequalities, comparison of quantum tensor and quantum commuting models continues to be a vibrant area of research, for example in the context of steering scenarios \cite{Banacki, Nava, Yan}, quantum embezzlement tasks \cite{Cleve0, Cleve, Luijk1, Luijk2, Luijk3} and many other related topics \cite{Col, Col2, Gao, Goildbring1, Goldbring2, Luijk0,Paddock}.  

In this manuscript, following the general premise described above, we analyze quantum informational protocols based on (generalized) unitary induced channels that can be considered within both quantum commuting and quantum tensor paradigms without a need of measurements performed on local infinite-dimensional subsystems. We provide equivalent characterization of such sets of channels (in terms of so-called Brown algebras) and discuss their connection with the original Tsirelson's problem.

\section{Unitary induced channels}
 Through the paper, we assume that each subsystem of interest can be modeled by some (unital) C*-algebra $C$, while states $\phi$ of such subsystems are given by positive and unital linear functionals on $C$, and by $S(C)$ we denote the set of all such states on $C$. In particular for any Hibert space $H$ we will consider a unital C*-algebra $B(H)$ of all bounded operators acting on $H$. Note that in a light of previous discussion, separated subsystems $A$, $B$ can be characterized either by C*-subalgebras  $A\subset B(H_A)$, $B\subset B(H_B)$ (which can be seen as commuting subalgebras in $B(H_A\otimes H_B)$) or a pair of commuting C*-subalgebras $A,B\subset B(H)$.

Let two (separated) parties, say Alice and Bob, share a join system $AB$ and each of them has an access only to the separated part of that composed system (denoted respectively by $A$ and $B$). Moreover, let both observer have additional access to local, finite-dimensional ancillas $A'$ and $B'$ (of fixed dimension $n$, i.e. modeled by a C*-algebra of matrices $A',B'=M_n(\mathbb{C})$) such that subsystems $A'$ and $A$ (respectively $B'$ and $B$) may interact (through some unitary). 

In such a setup we consider an evolution of a state $\varphi_{A'B'}$ describing subsystem $A'B'$ under  interactation with subsystem $AB$ (described by a fixed state $\phi_{AB}$) mediated through given unitaries $U_{A'A}$ and $V_{BB'}$. Note that for a tensor product description we have $U_{A'A}\in A'\otimes B(H_A)$ and $V_{BB'}\in B(H_B)\otimes B'$, while $A'\otimes B(H_A\otimes H_B)\otimes B'$ stands for the C*-algebra of the whole system. On the other hand, in a commuting paradigm $U_{A'A}\in A'\otimes B(H)$ and $V_{BB'}\in B(H)\otimes B'$ such that they commute in the C*-algebra of the whole system given as $A'\otimes B(H)\otimes B'$.

Mathematical description of the above scenario depending on a choice of initial paradigm (i.e. quantum commuting or quantum tensor description of subsystem $AB$) is encapsulated by the following definitions of unitary induced channels.

\color{black}

\begin{definition}\label{def_qc}
We say that a map $\Lambda:S(M_n(\mathbb{C})\otimes M_n(\mathbb{C}))\rightarrow S(M_n(\mathbb{C})\otimes M_n(\mathbb{C}))$ defines a unitary induced channel in a quantum commuting model if there exists a state $\phi \in S(B(H))$ (for some Hilbert space $H$) and unitaries $U=(u_{ij})_{i,j}\in M_n(B(H))$, $V=(v_{kl})_{k,l}\in M_n(B(H))$ such that $(U\otimes \mathds{1}_{B'})(\mathds{1}_{A'}\otimes V)=(\mathds{1}_{A'}\otimes V)(U\otimes \mathds{1}_{B'})$ and 
\begin{equation}
\Lambda(\varphi_{A'B'})(g\otimes h)=\varphi_{A'B'}\otimes \phi((U\otimes \mathds{1}_{B'})(\mathds{1}_{A'}\otimes V)g\otimes \mathds{1}_{AB}\otimes h (\mathds{1}_{A'}\otimes V^*)(U^*\otimes \mathds{1}_{B'})) 
\end{equation}for any $\varphi_{A'B'}\in S(M_n(\mathbb{C})\otimes M_n(\mathbb{C}))$ and $g\otimes h \in A'\otimes B'=M_n(\mathbb{C})\otimes M_n(\mathbb{C})$.

By $\mathcal{L}_{qc}(n)$ we will denote the set of all unitary induced channels in a quantum commuting model for fixed $n$.
\end{definition}

\begin{definition}\label{def_qa}

We say that a map $\Lambda:S(M_n(\mathbb{C})\otimes M_n(\mathbb{C}))\rightarrow S(M_n(\mathbb{C})\otimes M_n(\mathbb{C}))$ defines a unitary induced channel in a quantum tensor model if there exists a state $\phi \in S(B(H_A\otimes H_B))$ (for some Hilbert spaces $H_A$, $H_B$) and unitaries  $U=(u_{ij})_{i,j}\in M_n(B(H_A))$, $V=(v_{kl})_{k,l}\in M_n(B(H_B))$ such that  
\begin{equation}
\Lambda(\varphi_{A'B'})(g\otimes h)=\varphi_{A'B'}\otimes \phi((U\otimes \mathds{1}_{BB'})(\mathds{1}_{A'A}\otimes V)g\otimes \mathds{1}_{AB}\otimes h (\mathds{1}_{A'A}\otimes V^*)(U^*\otimes \mathds{1}_{BB'})). 
\end{equation}for any $\varphi_{A'B'}\in S(M_n(\mathbb{C})\otimes M_n(\mathbb{C}))$ and $g\otimes h \in A'\otimes B'=M_n(\mathbb{C})\otimes M_n(\mathbb{C})$.

By $\mathcal{L}_{qa}(n)$ we will denote the set of all unitary induced channels in a quantum tensor model for fixed $n$.
\end{definition}

Allowing for additional dependency of unitaries $U,V$ on classical labels $x,y$ (i.e. allowing different types of possible interactions between subsystems $A'$ and $A$ or $B'$ and $B$), we may consider a generalized scenario for unitary induced channels.

\begin{definition}\label{def_qcxy}
We say that a collection $\left\{\Lambda_{xy}\right\}_{x,y}$ of maps $\Lambda_{xy}:S(M_n(\mathbb{C})\otimes M_n(\mathbb{C}))\rightarrow S(M_n(\mathbb{C})\otimes M_n(\mathbb{C}))$ defines a generalized unitary induced channel in a quantum commuting model if there exists a state $\phi \in S(B(H))$ (for some Hilbert space $H$) and unitaries  $U^x=(u^x_{ij})_{i,j}\in M_n(B(H))$, $V^y=(v^y_{kl})_{k,l}\in M_n(B(H))$ such that $(U^x\otimes \mathds{1}_{B'})(\mathds{1}_{A'}\otimes V^y)=(\mathds{1}_{A'}\otimes V^y)(U^x\otimes \mathds{1}_{B'})$  and
\begin{equation}
\Lambda_{xy}(\varphi_{A'B'})(g\otimes h)=\varphi_{A'B'}\otimes \phi ((U^x\otimes \mathds{1}_{B'})(\mathds{1}_{A'}\otimes V^y)g\otimes \mathds{1}_{AB}\otimes h (\mathds{1}_{A'}\otimes (V^y)^*)((U^x)^*\otimes \mathds{1}_{B'})).
\end{equation}for all $x,y=1,\ldots, m$ and for any $\varphi_{A'B'}\in S(M_n(\mathbb{C})\otimes M_n(\mathbb{C}))$, and $g\otimes h \in A'\otimes B'=M_n(\mathbb{C})\otimes M_n(\mathbb{C})$.

By $\mathcal{L}_{qc}(m,n)$ we will denote the set of all generalized unitary induced channels in a quantum commuting model for fixed $m,n$.
\end{definition}

\begin{definition}\label{def_qaxy}
We say that a collection $\left\{\Lambda_{xy}\right\}_{x,y}$ of maps $\Lambda_{xy}:S(M_n(\mathbb{C})\otimes M_n(\mathbb{C}))\rightarrow S(M_n(\mathbb{C})\otimes M_n(\mathbb{C}))$ defines a generalized unitary induced channel in a quantum tensor model if there exists a $\phi \in S(B(H_A\otimes H_B))$ (for some Hilbert spaces $H_A$, $H_B$) and unitaries  $U^x=(u^x_{ij})_{i,j}\in M_n(B(H_A))$, $V^y=(v^y_{kl})_{k,l}\in M_n(B(H_B))$ such that 
\begin{equation}
\Lambda_{xy}(\varphi_{A'B'})(g\otimes h)=\varphi_{A'B'}\otimes \phi ((U^x\otimes \mathds{1}_{BB'})(\mathds{1}_{A'A}\otimes V^y)g\otimes \mathds{1}_{AB}\otimes h (\mathds{1}_{A'A}\otimes (V^y)^*)((U^x)^*\otimes \mathds{1}_{BB'})).
\end{equation}for all $x,y=1,\ldots, m$ and for any $\varphi_{A'B'}\in S(M_n(\mathbb{C})\otimes M_n(\mathbb{C}))$, and $g\otimes h \in A'\otimes B'=M_n(\mathbb{C})\otimes M_n(\mathbb{C})$.

By $\mathcal{L}_{qc}(m,n)$ we will denote the set of all generalized unitary induced channels in a quantum tensor model for fixed $m,n$.
\end{definition}

Note that a state $\phi\in B(H)$ in Definition \ref{def_qc} and Definition \ref{def_qcxy} without loss of generality can be given as $\phi(\cdot)=\langle \eta, \cdot\ \eta \rangle$ for some unit vector $|\eta\rangle \in H$, while a state $\phi\in B(H_A\otimes H_B)$ in Definition \ref{def_qa} and Definition \ref{def_qaxy} can be expressed as $\phi(\cdot)=\langle \eta, \cdot\ \eta \rangle$ for some unit vector $|\eta\rangle \in H_A\otimes H_B$ without loss of generality up to the closure. Therefore, we can see sets of generalized unitary induced channels in analogy with sets $\mathcal{C}_{qc}(m,n)$ and $\mathcal{C}_{qa}(m,n)$.

\color{black}

Let us observe that any unitary $U\in M_n(B(H))=M_n(\mathbb{C})\otimes B(H)$ can be represented as 
\begin{equation}
 U= \sum_{i,j=1}^n E_{ij}\otimes u_{ij}
\end{equation}where $\left\{E_{ij}\right\}_{i,j=1}^{n}$ stand for matrix units and $\left\{u_{ij}\right\}_{i,j=1}^{n}$ are some elements in $B(H)$. Therefore, as has been shown in \cite{Cleve0}, commuting model can be completely characterized in terms of commutation relation between entries $u_{ij}$ and $v_{kl}$. Indeed, if $(U\otimes \mathds{1}_{B'})(\mathds{1}_{A'}\otimes V)=(\mathds{1}_{A'}\otimes V)(U\otimes \mathds{1}_{B'})$, then $(\mathds{1}_{A'}\otimes V)(U^*\otimes \mathds{1}_{B'})=(U^*\otimes \mathds{1}_{B'})(\mathds{1}_{A'}\otimes V)$ and therefore 
\begin{equation}\nonumber
\sum_{i,j,k,l=1}^n E_{ij}\otimes u_{ij}v_{kl}\otimes E_{kl}=(U\otimes \mathds{1}_{B'})(\mathds{1}_{A'}\otimes V)=(\mathds{1}_{A'}\otimes V)(U\otimes \mathds{1}_{B'})=\sum_{i,j,k,l=1}^n E_{ij}\otimes v_{kl}u_{ij}\otimes E_{kl}
\end{equation}
\begin{equation}\nonumber
\sum_{i,j,k,l=1}^n E_{ji}\otimes u^*_{ij}v_{kl}\otimes E_{kl}=(U^*\otimes \mathds{1}_{B'})(\mathds{1}_{A'}\otimes V)=(\mathds{1}_{A'}\otimes V)(U^*\otimes \mathds{1}_{B'})=\sum_{i,j,k,l=1}^n E_{ji}\otimes v_{kl}u^*_{ij}\otimes E_{kl}
\end{equation}or equivalently $u_{ij}v_{kl}=v_{kl}u_{ij}$ and $u^*_{ij}v_{kl}=v_{kl}u^*_{ij}$ for all $i,j,k,l=1,\ldots, m$. It is obvious, that the converse implication holds as well.

This simple observation allows for introduction of Brown algebras \cite{Brown_alg} as a convenient tool for characterisation of considered schemes.

\begin{definition}\label{Browndef}Brown algebra $\mathcal{U}(n)$ is a universal (unital) C*-algebra generated by a collection of elements $\left\{u_{ik}\right\}_{i,k=1}^{n}$ fulfilling relations making $U=(u_{ik})_{i,k}$ a unitary matrix, i.e.
\begin{equation}\nonumber
\mathcal{U}(n)=C^*\Biggl \langle\left\{u_{ik}\right\}_{i,k=1}^{n}: \sum_{k=1}^nu_{ik}u^*_{lk}=\delta_{il}\mathds{1}, \sum_{i=1}^nu^*_{ik}u_{il}=\delta_{kl}\mathds{1}\biggr \rangle.
\end{equation}
\end{definition}

Evoked notion of the Brown algebras enable us to provide a different description of (generalized) unitary induced channels.

\begin{theorem}\label{max}The following conditions are equivalent
\begin{enumerate}
\item $\Lambda\in \mathcal{L}_{qc}(n)$.
\item  There exists a state $\tilde{\phi} \in S(A\otimes_{max}B)$ for some unital C*-algebras $A,B$ such that 
for any state $\varphi_{A'B'}\in S(M_n(\mathbb{C})\otimes M_n(\mathbb{C}))$ and $g\otimes h\in A'\otimes B'= M_n(\mathbb{C})\otimes M_n(\mathbb{C})$

\begin{equation}\label{formula_1}
\Lambda(\varphi_{A'B'})(g\otimes h)=\sum_{i,j,k,l,p,r,s,t}\varphi_{A'B'}(E_{ij}gE_{kl}\otimes E_{pr}hE_{st})\tilde{\phi}(\tilde{u}_{ij}\tilde{u}^*_{lk}\otimes \tilde{v}_{pr}\tilde{v}^*_{ts}),
\end{equation}where $U=(\tilde{u}_{ij})_{i,j}\in M_n(A)$ and $V=(\tilde{v}_{kl})_{k,l}\in M_n(B)$ are unitaries.

\item  There exists a state  $\phi \in S(\mathcal{U}(n)\otimes_{max}\mathcal{U}(n))$ such that 
for any state $\varphi_{A'B'}\in S(M_n(\mathbb{C})\otimes M_n(\mathbb{C}))$ and $g\otimes h\in A'\otimes B'=M_n(\mathbb{C})\otimes M_n(\mathbb{C})$
\begin{equation}\label{formula_2}
\Lambda(\varphi_{A'B'})(g\otimes h)=\sum_{i,j,k,l,p,r,s,t}\varphi_{A'B'}(E_{ij}gE_{kl}\otimes E_{pr}hE_{st})\phi(u_{ij}u^*_{lk}\otimes v_{pr}v^*_{ts}),
\end{equation}where $\left\{u_{ij}\right\}_{i,j=1}^{n}$ and $\left\{v_{kl}\right\}_{k,l=1}^{n}$ stand respectively for generators of two copies of Brown algebra $\mathcal{U}(n)$.
\end{enumerate}
\end{theorem}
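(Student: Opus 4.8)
The plan is to establish the cycle of implications $(1)\Rightarrow(2)\Rightarrow(3)\Rightarrow(1)$, reading each arrow as the transport of a state along a canonical unital $*$-homomorphism. Before starting the cycle, I would record the explicit ``decoupled'' form of the channel. Writing $U=\sum_{ij}E_{ij}\otimes u_{ij}$ and $V=\sum_{kl}v_{kl}\otimes E_{kl}$ and multiplying out
$(U\otimes\mathds{1}_{B'})(\mathds{1}_{A'}\otimes V)\,(g\otimes\mathds{1}_{AB}\otimes h)\,(\mathds{1}_{A'}\otimes V^*)(U^*\otimes\mathds{1}_{B'})$, then applying $\varphi_{A'B'}\otimes\phi$ and using the commutation relations $u_{ij}v_{kl}=v_{kl}u_{ij}$, $u^*_{ij}v_{kl}=v_{kl}u^*_{ij}$ to separate the $u$- and $v$-blocks, yields after relabelling
\[
\Lambda(\varphi_{A'B'})(g\otimes h)=\sum_{i,j,k,l,p,r,s,t}\varphi_{A'B'}(E_{ij}gE_{kl}\otimes E_{pr}hE_{st})\,\phi(u_{ij}u^*_{lk}\,v_{pr}v^*_{ts}),
\]
which is the single-algebra (commuting) version of (\ref{formula_2}). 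This index bookkeeping is the only genuinely computational step, and it makes the role of the hypothesis explicit: in the commuting model the number $\phi(u_{ij}u^*_{lk}\,v_{pr}v^*_{ts})$ depends on the two blocks $u_{ij}u^*_{lk}$ and $v_{pr}v^*_{ts}$ only through their product, which is precisely the data carried by a state on a maximal tensor product.

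For $(1)\Rightarrow(2)$ I would let $A,B\subset B(H)$ be the unital C*-subalgebras generated by $\{u_{ij}\}$ and $\{v_{kl}\}$; unitality is automatic since $\sum_k u_{1k}u^*_{1k}=\mathds{1}\in A$ by the relations of Definition~\ref{Browndef}, and likewise for $B$. As $A$ and $B$ commute inside $B(H)$, the universal property of the maximal tensor product supplies a unital $*$-homomorphism $\pi:A\otimes_{max}B\to B(H)$ with $\pi(a\otimes b)=ab$; taking $\tilde\phi=\phi\circ\pi$, $\tilde u_{ij}=u_{ij}$, $\tilde v_{kl}=v_{kl}$ converts the decoupled formula into (\ref{formula_1}).

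For $(2)\Rightarrow(3)$ I would invoke the universal property of the Brown algebra: since $U=(\tilde u_{ij})$ is a unitary matrix over the unital algebra $A$, there is a unique unital $*$-homomorphism $\alpha:\mathcal{U}(n)\to A$ with $u_{ij}\mapsto\tilde u_{ij}$, and similarly $\beta:\mathcal{U}(n)\to B$ with $v_{kl}\mapsto\tilde v_{kl}$. By functoriality of $\otimes_{max}$ these assemble into a unital $*$-homomorphism $\alpha\otimes\beta:\mathcal{U}(n)\otimes_{max}\mathcal{U}(n)\to A\otimes_{max}B$, and $\phi:=\tilde\phi\circ(\alpha\otimes\beta)$ reproduces (\ref{formula_2}). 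For the closing arrow $(3)\Rightarrow(1)$ I would apply the GNS construction to $\phi\in S(\mathcal{U}(n)\otimes_{max}\mathcal{U}(n))$, obtaining a Hilbert space $H$, a cyclic unit vector $|\eta\rangle$ and a representation $\Pi$ with $\phi(\cdot)=\langle\eta|\Pi(\cdot)|\eta\rangle$. Setting $u_{ij}:=\Pi(u_{ij}\otimes 1)$ and $v_{kl}:=\Pi(1\otimes v_{kl})$ produces two unitary matrices in $M_n(B(H))$ whose entries (and their adjoints) commute, precisely because $\mathcal{U}(n)\otimes 1$ and $1\otimes\mathcal{U}(n)$ commute in the max tensor product; the vector state $\langle\eta|\,\cdot\,|\eta\rangle$ then turns (\ref{formula_2}) back into the defining formula of Definition~\ref{def_qc}, giving $\Lambda\in\mathcal{L}_{qc}(n)$.

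The conceptual content is light, since each arrow is a standard universal-property or GNS argument, so the main obstacle is organizational: carrying out the opening index computation without error and verifying the unitality and commuting-range hypotheses needed to apply the two universal properties and the functoriality of $\otimes_{max}$. I would be especially careful at the two points where the \emph{commuting} nature of the model is encoded, namely that the subalgebras generated in $(1)\Rightarrow(2)$ genuinely contain the unit, and that the GNS representation in $(3)\Rightarrow(1)$ really sends the two tensor legs to commuting operators.
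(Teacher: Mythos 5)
Your proposal is correct and follows essentially the same route as the paper: the same index expansion using the commutation of the entries, the same use of the generated commuting subalgebras and the universal property of $\otimes_{max}$ for $(1)\Rightarrow(2)$, and the same appeal to the universal property of $\mathcal{U}(n)$ plus functoriality of $\otimes_{max}$ for $(2)\Rightarrow(3)$. The only (immaterial) divergence is in $(3)\Rightarrow(1)$, where you use the GNS representation of $\phi$ to produce a vector state directly, while the paper embeds $\mathcal{U}(n)\otimes_{max}\mathcal{U}(n)$ into some $B(H)$ and extends $\phi$ to a state on $B(H)$; both are standard and interchangeable here.
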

\begin{proof}To show implication $(1)\Rightarrow (2)$, observe that for $\Lambda\in \mathcal{L}_{qc}(n)$ we have
\begin{align}\label{id}
 \Lambda(\varphi_{A'B'})(g\otimes h)&=\varphi_{A'B'}\otimes \psi ((U\otimes \mathds{1}_{B'})(\mathds{1}_{A'}\otimes V)g\otimes \mathds{1}_{AB}\otimes h (\mathds{1}_{A'}\otimes V^*)(U^*\otimes \mathds{1}_{B'}))\\ \nonumber
&=\sum_{i,j,k,l,p,r,s,t}\varphi_{A'B'}(E_{ij}gE_{kl}\otimes E_{pr}hE_{st})\psi(\tilde{u}_{ij}\tilde{v}_{pr}\tilde{v}^*_{ts}\tilde{u}^*_{lk})\\ \nonumber
&=\sum_{i,j,k,l,p,r,s,t}\varphi_{A'B'}(E_{ij}gE_{kl}\otimes E_{pr}hE_{st})\psi(\tilde{u}_{ij}\tilde{u}^*_{lk}\tilde{v}_{pr}\tilde{v}^*_{ts})
\end{align}for some $\psi \in S(B(H))$ and commuting unitaries $U=(\tilde{u}_{ij})_{i,j}\in M_n(B(H))$, $V=(\tilde{v}_{lk})_{k,l}\in M_n(B)$. Consider unital C*-algebras $A=C^*(\left\{\tilde{u}_{ij}\right\})\subset B(H)$ and $B=C^*(\left\{\tilde{v}_{ij}\right\})\subset B(H)$, i.e. let $A$ and $B$ be the smallest C*-subalgebras in $B(H)$ containing respectively elements $\tilde{u}_{ij}$ and $\tilde{v}_{ij}$ for $i,j=1,\ldots, n$. Define the following unital $*$-isomorphisms $I_A:A\rightarrow A\subset B(H)$ and $I_B:B\rightarrow B\subset B(H)$ such that $I_A(c)=c$ and $I_B(d)=d$ for all $c\in A$, $d\in B$. Since $u_{ij}v_{kl}=v_{kl}u_{ij}$ for all $i,j,k,l=1,\ldots,n$, $A$ and $B$ are commuting C*-subalgebras in $B(H)$ and due to the property of maximal tensor products of C*-algebras (see \cite{Brown}), there exists a unital completely positive map (in fact it is a $*$-homomorphism) $I:A\otimes_{max}B\rightarrow B(H)$ such that $I(c\otimes d)=cd$ for a all $c\in A$, $d\in B$. Because of that, formula (\ref{id}) becomes (\ref{formula_1}) with $\tilde{\phi}=\psi\circ I\in S(A\otimes_{max}B)$ which concludes the first implication.

To justify $(2)\Rightarrow (3)$, observe that there exist unique unital $*$-homomorphisms $\Phi_A:\mathcal{U}(n)\rightarrow A$ and $\Phi_B:\mathcal{U}(n)\rightarrow B$ such that $\Phi_A(u_{ij})=\tilde{u}_{ij}$ and $\Phi_B(v_{ij})=\tilde{v}_{ij}$  for any $i,j=1,\ldots, n$ respectively. By the property of the maximal tensor product of C*-algebras (see \cite{Brown}) there exists a unital $*$-homomorphism $\Phi:\mathcal{U}(n)\otimes_{max}\mathcal{U}(n)\rightarrow A\otimes_{max}B$ such that $\Phi(c\otimes d)=\Phi_A(c)\otimes \Phi_B(d)$ for all $c,d\in \mathcal{U}(n)$. If so, then equation (\ref{formula_2}) coincides with equation (\ref{formula_1}) when $\phi=\tilde{\phi}\circ \Phi\in S(\mathcal{U}(n)\otimes_{max}\mathcal{U}(n))$ and $(3)$ is fulfilled.

Finally, to show implication $(3)\Rightarrow (1)$, note that there exists a Hilbert space $H$ such that $\mathcal{U}(n)\otimes_{max} \mathcal{U}(n)\subset B(H)$, i.e. $\mathcal{U}(n)\otimes_{max} \mathcal{U}(n)$ can be seen as a unital C*-subalgebra in some $B(H)$. If so, then one can extend $\phi\in S(\mathcal{U}(n)\otimes_{max} \mathcal{U}(n))$ to the state $\psi\in S(B(H))$. It is clear, that $\psi$ can replace $\phi$ in formula (\ref{formula_2}), hence $\Lambda\in \mathcal{L}_{qc}(n)$ with $\tilde{u}_{ij}=u_{ij}\otimes \mathds{1}_{B}\in \mathcal{U}(n)\otimes_{max} \mathcal{U}(n)\subset B(H)$ and $\tilde{v}_{ij}=\mathds{1}_{A}\otimes v_{ij}\in \mathcal{U}(n)\otimes_{max} \mathcal{U}(n)\subset B(H)$ for all $i,j=1,\ldots, n$.

\end{proof}

Analogous reformulations can be obtained in the case of the quantum tensor model, if one exchange properties of the maximal tensor product of C*-algebras with properties of the minimal one (compare with \cite{Brown}).

\begin{theorem}\label{min}The following conditions are equivalent
\begin{enumerate}
\item $\Lambda\in \mathcal{L}_{qa}(n)$.
\item  There exists a state $\tilde{\phi} \in S(A\otimes_{min}B)$ for some unital C*-algebras $A,B$ such that 
for any state $\varphi_{A'B'}\in S(M_n(\mathbb{C})\otimes M_n(\mathbb{C}))$ and $g\otimes h\in A'\otimes B'=M_n(\mathbb{C})\otimes M_n(\mathbb{C})$

\begin{equation}\label{formula_1ss}
\Lambda(\varphi_{A'B'})(g\otimes h)=\sum_{i,j,k,l,p,r,s,t}\varphi_{A'B'}(E_{ij}gE_{kl}\otimes E_{pr}hE_{st})\tilde{\phi}(\tilde{u}_{ij}\tilde{u}^*_{lk}\otimes \tilde{v}_{pr}\tilde{v}^*_{ts}),
\end{equation}where $U=(\tilde{u}_{ij})_{i,j}\in M_n(A)$ and $V=(\tilde{v}_{kl})_{k,l}\in M_n(B)$ are unitaries.
\item  There exists a state  $\phi \in S(\mathcal{U}(n)\otimes_{min}\mathcal{U}(n))$ such that 
for any state $\varphi_{A'B'}\in S(M_n(\mathbb{C})\otimes M_n(\mathbb{C}))$ and $g\otimes h\in A'\otimes B'=M_n(\mathbb{C})\otimes M_n(\mathbb{C})$
\begin{equation}\label{formula_2ss}
\Lambda(\varphi_{A'B'})(g\otimes h)=\sum_{i,j,k,l,p,r,s,t}\varphi_{A'B'}(E_{ij}gE_{kl}\otimes E_{pr}hE_{st})\phi(u_{ij}u^*_{lk}\otimes v_{pr}v^*_{ts}),
\end{equation}where $\left\{u_{ij}\right\}_{i,j=1}^{n}$ and $\left\{v_{kl}\right\}_{k,l=1}^{n}$ stand respectively for generators of two copies of Brown algebra $\mathcal{U}(n)$.
\end{enumerate}
\end{theorem}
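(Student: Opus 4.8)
The plan is to mirror the proof of Theorem \ref{max} essentially verbatim, replacing the maximal tensor product $\otimes_{max}$ by the minimal tensor product $\otimes_{min}$ and adjusting the single place where the commuting structure was exploited. The overall architecture is the same cyclic chain of implications $(1)\Rightarrow(2)\Rightarrow(3)\Rightarrow(1)$, and the algebraic identity \eqref{id} that expands $\Lambda(\varphi_{A'B'})(g\otimes h)$ into a sum over matrix-unit indices is purely formal and carries over without change. The only conceptual difference lies in how the two unitaries $U$ and $V$ sit inside the global algebra: in the quantum tensor model there are genuine local Hilbert spaces $H_A$, $H_B$ with $U=(\tilde u_{ij})\in M_n(B(H_A))$ and $V=(\tilde v_{kl})\in M_n(B(H_B))$, so the entries $\tilde u_{ij}$ and $\tilde v_{kl}$ live on distinct tensor factors rather than merely commuting inside a single $B(H)$.

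For $(1)\Rightarrow(2)$, I would take $\Lambda\in\mathcal{L}_{qa}(n)$ and set $A=C^*(\{\tilde u_{ij}\})\subset B(H_A)$ and $B=C^*(\{\tilde v_{kl}\})\subset B(H_B)$. Expanding exactly as in \eqref{id}, the state $\psi\in S(B(H_A\otimes H_B))$ acts on products $\tilde u_{ij}\tilde u^*_{lk}\otimes \tilde v_{pr}\tilde v^*_{ts}$, where now the factorization across $H_A\otimes H_B$ is automatic because the $u$'s act on $H_A$ and the $v$'s on $H_B$; no commutation hypothesis is needed to reorder them. The key step is to invoke the universal property of the \emph{minimal} tensor product: since $A\subset B(H_A)$ and $B\subset B(H_B)$ are concretely represented on separate Hilbert spaces, the inclusion $A\otimes_{min}B\hookrightarrow B(H_A)\otimes_{min}B(H_B)\subset B(H_A\otimes H_B)$ is a faithful unital $*$-homomorphism $I$ with $I(c\otimes d)=c\otimes d$ (compare \cite{Brown}). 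Setting $\tilde\phi=\psi\circ I\in S(A\otimes_{min}B)$ turns \eqref{id} into \eqref{formula_1ss}. The implication $(2)\Rightarrow(3)$ again uses the universal maps $\Phi_A:\mathcal{U}(n)\to A$, $\Phi_B:\mathcal{U}(n)\to B$ sending generators to $\tilde u_{ij}$, $\tilde v_{ij}$; functoriality of $\otimes_{min}$ yields $\Phi:\mathcal{U}(n)\otimes_{min}\mathcal{U}(n)\to A\otimes_{min}B$, and $\phi=\tilde\phi\circ\Phi$ gives \eqref{formula_2ss}. For $(3)\Rightarrow(1)$, I would represent $\mathcal{U}(n)\otimes_{min}\mathcal{U}(n)$ faithfully on a Hilbert space of the form $H_A\otimes H_B$: concretely, represent the first copy $\mathcal{U}(n)$ on $H_A$ and the second on $H_B$ so that $u_{ij}\mapsto u_{ij}\otimes\mathds{1}$ acts on $H_A$ and $v_{kl}\mapsto\mathds{1}\otimes v_{kl}$ acts on $H_B$, which is exactly what the defining spatial construction of $\otimes_{min}$ provides; extend $\phi$ to a state $\psi$ on $B(H_A\otimes H_B)$ and read off $\Lambda\in\mathcal{L}_{qa}(n)$.

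The point that deserves the most care — and the main (if modest) obstacle — is the replacement of the sentence \emph{``$A$ and $B$ are commuting C*-subalgebras in $B(H)$ and due to the property of maximal tensor products there exists a ucp map $I:A\otimes_{max}B\to B(H)$''} by the correct minimal-tensor statement. In the maximal case the existence of $I$ follows from commuting pairs of representations into a single $B(H)$; in the minimal case the relevant fact is instead that two representations on \emph{separate} Hilbert spaces $H_A$, $H_B$ induce a representation of $A\otimes_{min}B$ on $H_A\otimes H_B$, and that this spatial tensor representation is isometric precisely because of the defining property of $\otimes_{min}$. One must be careful that the map produced this way is genuinely defined on $\otimes_{min}$ (not on $\otimes_{max}$), which is what makes the minimal version of statement $(3)$ strictly the right analogue; everything else is the same index bookkeeping as in Theorem \ref{max}, so I would simply point to the proof of Theorem \ref{max} and indicate only these substitutions.
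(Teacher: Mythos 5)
Your proposal is correct and follows exactly the route the paper intends: the paper omits the proof of Theorem \ref{min}, stating only that one repeats the argument of Theorem \ref{max} with the properties of $\otimes_{max}$ exchanged for those of $\otimes_{min}$, and you have filled in precisely the right substitutions (injectivity/spatiality of $\otimes_{min}$ in place of the universal property of $\otimes_{max}$ for commuting representations in $(1)\Rightarrow(2)$ and $(3)\Rightarrow(1)$, functoriality of $\otimes_{min}$ under $*$-homomorphisms in $(2)\Rightarrow(3)$). You also correctly observe that no commutation step is needed in the analogue of identity \eqref{id}, since the entries of $U$ and $V$ already live on distinct tensor factors.
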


To use universal properties of Brown algebras $\mathcal{U}(n)$ for a similar description of generalized scenarios, we recall the notion of the (full) free product $A*B$ of C*-algebras $A, B$ (see \cite{Avi,Boca,Pisier} for broader discussion of that concept) and the following technical result due to \cite{Boca}.

\begin{theorem}\label{boca}Let $A$, $B$ and $C$ be unital C*-algebras. If $\Phi_A:A\rightarrow C$ and $\Phi_B:B\rightarrow C$ are unital and completely positive maps, then there exists a unital completely positive map $\Phi:A*B\rightarrow C$ such that $\Phi|_A=\Phi_A$ and $\Phi|_B=\Phi_B$, i.e. there exists a common extension of both maps $\Phi_A, \Phi_B$.
\end{theorem}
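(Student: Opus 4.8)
The plan is to derive the statement from the universal property of the full free product combined with a dilation argument. Recall that $A*B$ is the coproduct of $A$ and $B$ in the category of unital C*-algebras: a pair of unital $*$-homomorphisms $\lambda_A\colon A\to D$ and $\lambda_B\colon B\to D$ into a common unital C*-algebra $D$ amalgamates uniquely to a unital $*$-homomorphism $\lambda\colon A*B\to D$ restricting to $\lambda_A$ and $\lambda_B$. Since, by Stinespring's theorem, every UCP map is a compression of a $*$-homomorphism, the problem reduces to manufacturing a single object $D$ carrying representations of both $A$ and $B$ together with a compression back onto a copy of $C$ that simultaneously recovers $\Phi_A$ and $\Phi_B$.

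Concretely, I would first fix a faithful unital representation $C\subseteq B(H)$ and dilate each map, obtaining representations $\pi_A\colon A\to B(K_A)$, $\pi_B\colon B\to B(K_B)$ and isometries $V_A\colon H\to K_A$, $V_B\colon H\to K_B$ with $\Phi_A(a)=V_A^*\pi_A(a)V_A$ and $\Phi_B(b)=V_B^*\pi_B(b)V_B$; identifying $H$ with its images and writing $K_i^{\circ}=K_i\ominus H$. The guiding heuristic is to glue $\pi_A$ and $\pi_B$ onto one space by forming the reduced free product of the pointed spaces $(K_A,H)$, $(K_B,H)$ amalgamated over the common vacuum $H$,
\begin{equation}\nonumber
\mathcal{K}=H\oplus\bigoplus_{n\geq 1}\ \bigoplus_{i_1\neq\cdots\neq i_n} K_{i_1}^{\circ}\otimes\cdots\otimes K_{i_n}^{\circ},
\end{equation}
on which the standard free-product formulas furnish representations $\lambda_A,\lambda_B$. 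Amalgamating these through the universal property yields $\lambda\colon A*B\to B(\mathcal{K})$, and setting $\Phi(w)=P_H\lambda(w)|_H$ (with $P_H$ the projection onto $H$) produces a UCP map; the orthogonality relations built into the free product force $\Phi|_A=\Phi_A$ and $\Phi|_B=\Phi_B$, since a single generator sends the vacuum to its $\Phi$-value plus a vector orthogonal to $H$.

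The hard part, and the reason the construction needs genuine care, is to guarantee that $\Phi$ takes values in $C$ rather than merely in $B(H)$. The scalar picture above is in fact only literally valid when $\dim H=1$: the annihilation part of $\lambda_i(a)$ produces a vacuum component that must then multiply the remaining tensor factors, an operation that makes no sense when the vacuum is a subspace rather than the field. The remedy is to run the whole argument in the category of right Hilbert $C$-modules — replacing Stinespring by the KSGNS dilation, so that $\Phi_i(a)=\langle\xi_i,\pi_i(a)\xi_i\rangle$ for a $C$-correspondence $E_i$ with a unit vector $\xi_i$ satisfying $\langle\xi_i,\xi_i\rangle=\mathds{1}$, and replacing $\mathcal{K}$ by the reduced free product of the $E_i$ over the vacuum copy of $C$. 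Because the inner product is now $C$-valued, the vacuum expectation $\Phi(w)=\langle\Omega,\lambda(w)\Omega\rangle$ lands in $C$ automatically and is completely positive, while the problematic "scaling by the vacuum'' becomes honest multiplication by elements of $C$. The technical heart is exactly the construction of this free product of correspondences — supplying the left $C$-actions required to form the interior tensor products $E_{i_1}^{\circ}\otimes_C\cdots\otimes_C E_{i_n}^{\circ}$ (the naive left action on a KSGNS module is unbounded, so one must build the module directly) and verifying adjointability of the resulting creation and annihilation operators — which is precisely the work carried out in \cite{Boca}.
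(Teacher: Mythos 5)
The paper does not actually prove this statement: Theorem \ref{boca} is imported verbatim from \cite{Boca} and used as a black box, so there is no internal proof to compare yours against. Judged on its own terms, your outline reproduces the architecture of the standard argument for Boca's theorem: dilate each $\Phi_i$ via Stinespring/KSGNS, glue the dilations by a free product amalgamated over the base, amalgamate the resulting representations through the coproduct property of $A*B$, and compress to the vacuum. Your diagnosis of where the naive Hilbert-space picture breaks down is also the right one: when $\lambda_i(a)$ acts on the first tensor leg, its vacuum component has to multiply the remaining legs, which is meaningless unless those legs are $C$-bimodules and the tensor products are interior tensor products over $C$. This is exactly why the construction must be run with $C$-correspondences and a $C$-valued vacuum expectation, and it also explains why the cheap combinatorial recipe that works for states ($\Phi$ vanishes on alternating words of centered elements) cannot even be formulated here, since $\Phi_i(a)\mathds{1}$ is not an element of $A_i$ when $\Phi_i(a)\in C$ is not a scalar.

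That said, what you have written is a road map rather than a proof. Every load-bearing step --- the construction of the left $C$-actions on the reduced modules $E_i^{\circ}$, the adjointability and norm estimates for the creation/annihilation operators defining $\lambda_i$, the verification that each $\lambda_i$ is a unital $*$-homomorphism (so that the universal property of $A*B$ applies at all), and the check that the vacuum expectation of $\lambda_i(a)$ equals $\Phi_i(a)$ --- is explicitly deferred to \cite{Boca}, i.e.\ to the very result being proved; as a proof this is circular, even though it is entirely consistent with how the paper itself uses the theorem (as a citation). To make the argument self-contained you would need to carry out the free product of the KSGNS correspondences in detail, or substitute one of the later streamlined proofs of Boca's theorem from the operator-algebra literature.
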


Coming form this result we can provide the following characterization.

\begin{theorem}\label{max_2}The following conditions are equivalent
\begin{enumerate}
\item $\left\{\Lambda_{xy}\right\}_{x,y}\in \mathcal{L}_{qc}(m,n)$.
\item  There exists a state $\tilde{\phi} \in S(A\otimes_{max}B)$ for some unital C*-algebras $A,B$ such that 
for any state $\varphi_{A'B'}\in S(M_n(\mathbb{C})\otimes M_n(\mathbb{C}))$ and $g\otimes h\in A'\otimes B'=M_n(\mathbb{C})\otimes M_n(\mathbb{C})$
\begin{equation}\label{formula_12}
\Lambda_{xy}(\varphi_{A'B'})(g\otimes h)=\sum_{i,j,k,l,p,r,s,t}\varphi_{A'B'}(E_{ij}gE_{kl}\otimes E_{pr}hE_{st})\tilde{\phi}(\tilde{u}^x_{ij}(\tilde{u}^x_{lk})^*\otimes \tilde{v}^y_{pr}(\tilde{v}^y_{ts})^*),
\end{equation}where $U^x=(\tilde{u}^x_{ij})_{i,j}\in M_n(A)$ and $V^y=(\tilde{v}^y_{kl})_{k,l}\in M_n(B)$ are unitaries for any $x,y=1,\ldots, m$.
\item  There exists a state $\phi \in S((*_{x=1}^m\mathcal{U}^x(n))\otimes_{max}(*_{y=1}^m\mathcal{U}^y(n)))$ such that for any state $\varphi_{A'B'}\in S(M_n(\mathbb{C})\otimes M_n(\mathbb{C}))$ and $g\otimes h\in A'\otimes B'=M_n(\mathbb{C})\otimes M_n(\mathbb{C})$
\begin{equation}\label{formula_22}
\Lambda_{xy}(\varphi_{A'B'})(g\otimes h)=\sum_{i,j,k,l,p,r,s,t}\varphi_{A'B'}(E_{ij}gE_{kl}\otimes E_{pr}hE_{st})\phi(u^x_{ij}(u^x_{lk})^*\otimes v^y_{pr}(v^y_{ts})^*),
\end{equation}where for any $x,y=1,\ldots, m$ elements $\left\{u^x_{ij}\right\}_{i,j=1}^{n}$ and $\left\{v^y_{kl}\right\}_{k,l=1}^{n}$ stand respectively for generators of the $x$-th copy of Brown algebra $\mathcal{U}^x(n)\subset *_{x=1}^m\mathcal{U}^x(n)$ and generators of the $y$-th copy of Brown algebra $\mathcal{U}^y(n)\subset *_{y=1}^m\mathcal{U}^y(n)$.
\end{enumerate}
\end{theorem}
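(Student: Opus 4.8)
The structure mirrors Theorem \ref{max} exactly, with the single Brown algebra $\mathcal{U}(n)$ replaced by the free product $*_{x=1}^m\mathcal{U}^x(n)$, and with Theorem \ref{boca} substituting for the universal property of $\mathcal{U}(n)$ used previously. I would prove the cycle $(1)\Rightarrow(2)\Rightarrow(3)\Rightarrow(1)$.

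For $(1)\Rightarrow(2)$, the argument is essentially identical to the corresponding step in Theorem \ref{max}, carried out uniformly in the labels $x,y$. Given $\left\{\Lambda_{xy}\right\}_{x,y}\in\mathcal{L}_{qc}(m,n)$, I expand each $\Lambda_{xy}(\varphi_{A'B'})(g\otimes h)$ in matrix units as in equation (\ref{id}) and use the commutation relations $(U^x\otimes\mathds{1}_{B'})(\mathds{1}_{A'}\otimes V^y)=(\mathds{1}_{A'}\otimes V^y)(U^x\otimes\mathds{1}_{B'})$ (equivalently $\tilde u^x_{ij}\tilde v^y_{kl}=\tilde v^y_{kl}\tilde u^x_{ij}$ and $(\tilde u^x_{ij})^*\tilde v^y_{kl}=\tilde v^y_{kl}(\tilde u^x_{ij})^*$) to reorder the $u$-terms and the $v$-terms. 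I then set $A=C^*(\{\tilde u^x_{ij}\}_{x,i,j})\subset B(H)$ and $B=C^*(\{\tilde v^y_{kl}\}_{y,k,l})\subset B(H)$, now the algebras generated by \emph{all} the unitaries over every label. Since the two families commute, the same maximal-tensor-product multiplication homomorphism $I:A\otimes_{max}B\rightarrow B(H)$ with $I(c\otimes d)=cd$ exists, and $\tilde\phi=\psi\circ I$ yields (\ref{formula_12}).

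For $(2)\Rightarrow(3)$, this is the step where the free product and Theorem \ref{boca} enter. Because $U^x=(\tilde u^x_{ij})_{i,j}$ is a unitary in $M_n(A)$, the universal property of the Brown algebra gives, for each $x$, a unital $*$-homomorphism $\mathcal{U}^x(n)\rightarrow A$ sending $u^x_{ij}\mapsto\tilde u^x_{ij}$; by the universal property of the full free product these assemble into a single unital $*$-homomorphism (in particular a unital completely positive map) $\Phi_A:*_{x=1}^m\mathcal{U}^x(n)\rightarrow A$, and symmetrically $\Phi_B:*_{y=1}^m\mathcal{U}^y(n)\rightarrow B$. Applying the maximal-tensor-product functoriality to $\Phi_A,\Phi_B$ produces $\Phi:(*_{x=1}^m\mathcal{U}^x(n))\otimes_{max}(*_{y=1}^m\mathcal{U}^y(n))\rightarrow A\otimes_{max}B$ with $\Phi(c\otimes d)=\Phi_A(c)\otimes\Phi_B(d)$, and then $\phi=\tilde\phi\circ\Phi$ turns (\ref{formula_12}) into (\ref{formula_22}). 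Finally $(3)\Rightarrow(1)$ is the verbatim analogue of the last step of Theorem \ref{max}: represent $(*_{x=1}^m\mathcal{U}^x(n))\otimes_{max}(*_{y=1}^m\mathcal{U}^y(n))$ faithfully on some $B(H)$, extend $\phi$ to a state $\psi\in S(B(H))$, and read off the required commuting unitaries as $\tilde u^x_{ij}=u^x_{ij}\otimes\mathds{1}$ and $\tilde v^y_{kl}=\mathds{1}\otimes v^y_{kl}$, whose entries commute across the tensor factor so that the commutation constraint in Definition \ref{def_qcxy} holds.

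\textbf{Main obstacle.} The one genuinely new ingredient, compared with Theorem \ref{max}, is the passage through the free product in $(2)\Rightarrow(3)$: I must check that Theorem \ref{boca} (or iterated applications of it, extending one Brown-algebra factor at a time) legitimately produces a \emph{single} map off $*_{x=1}^m\mathcal{U}^x(n)$ that restricts correctly to each $\mathcal{U}^x(n)$, and that this map is compatible with the maximal tensor product so that functoriality applies. Since each $\Phi|_{\mathcal{U}^x(n)}$ is a genuine $*$-homomorphism (not merely unital completely positive), the assembled $\Phi_A$ is in fact a $*$-homomorphism on the free product, so the only care needed is that the universal property of $*_{x=1}^m\mathcal{U}^x(n)$ is stated for $*$-homomorphisms — which it is — making Theorem \ref{boca} a convenience rather than a strict necessity here. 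The remaining steps are routine bookkeeping with matrix-unit expansions identical to the $m=1$ case.
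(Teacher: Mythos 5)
Your proposal is correct and follows essentially the same route as the paper: the implications $(1)\Rightarrow(2)$ and $(3)\Rightarrow(1)$ are handled exactly as in Theorem \ref{max} (with $A$, $B$ generated by all the $\tilde u^x_{ij}$, resp.\ all the $\tilde v^y_{kl}$), and the new content sits in $(2)\Rightarrow(3)$, where the individual Brown-algebra homomorphisms $\Phi^x_A:\mathcal{U}^x(n)\to A$ must be assembled into a single map off $*_{x=1}^m\mathcal{U}^x(n)$ before applying functoriality of $\otimes_{max}$. The one place you diverge is in how that assembly is done: the paper invokes Theorem \ref{boca} recursively, obtaining only a unital completely positive map $\Phi_A$, whereas you correctly observe that since each $\Phi^x_A$ is a genuine unital $*$-homomorphism into the common target $A$, the universal property of the full (unital) free product already yields a $*$-homomorphism, making Boca's theorem unnecessary at this point. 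Both arguments are valid and both are compatible with the max-tensor functoriality step; the paper's choice to route everything through Theorem \ref{boca} is presumably for uniformity with the proof of Theorem \ref{thm_main_2}, where the maps $\pi_x$ being glued really are only assembled into a ucp (non-multiplicative) map and Boca's theorem is genuinely needed. Your version is marginally more elementary for this particular theorem.
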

\begin{proof} Implications $(1)\Rightarrow (2)$, as well as $(3)\Rightarrow (1)$  can be justified in similar manner to corresponding implications from Theorem \ref{max}.

Therefore, it remains to show that $(2)\Rightarrow (3)$. Observe that for any $x,y=1,\ldots, m$ there exist unique unital *-homomorphisms $\Phi^x_A:\mathcal{U}^x(n)\rightarrow A$ and $\Phi^y_B:\mathcal{U}^y(n)\rightarrow B$ such that $\Phi^x_A(u^x_{ij})=\tilde{u}^x_{ij}$ and $\Phi^y_B(v^y_{ij})=\tilde{v}^y_{ij}$  for any $i,j=1,\ldots, n$ respectively. If so, then recursive usage of Theorem \ref{boca} implies the existence of a unital completely positive maps $\Phi_A:*_{x=1}^m\mathcal{U}^x(n)\rightarrow A$ and $\Phi_B:*_{y=1}^m\mathcal{U}^y(n)\rightarrow B$ such that respectively for any $c \in \mathcal{U}^x(n)$ and $d \in \mathcal{U}^y(n)$ we have $\Phi_A(c)=\Phi_A^x(c)$ and $\Phi_B(d)=\Phi_B^y(d)$. Therefore, by the property of the maximal tensor product of C*-algebras, there exists a unital completely positive map $\Phi:(*_{x=1}^m\mathcal{U}^x(n))\otimes_{max}(*_{y=1}^m\mathcal{U}^y(n))\rightarrow A\otimes_{max}B$ such that $\Phi(c\otimes d)=\Phi_A(c)\otimes \Phi_B(d)$ for all $c \in *_{x=1}^m\mathcal{U}(n)$ and $d \in *_{y=1}^m\mathcal{U}(n)$. If so, then equation (\ref{formula_22}) coincides with equation (\ref{formula_12}) when $\phi=\tilde{\phi}\circ \Phi\in S((*_{x=1}^m\mathcal{U}^x(n))\otimes_{max}(*_{y=1}^m\mathcal{U}^y(n)))$ and the proof is completed.
\end{proof}

Once more, similar reasoning leads to the following result concerning the set of generalized unitary induced channels in quantum tensor paradigm.

\begin{theorem}\label{min_2}The following conditions are equivalent
\begin{enumerate}
\item $\left\{\Lambda_{xy}\right\}_{x,y}\in \mathcal{L}_{qa}(m,n)$.
\item  There exists a state $\tilde{\phi} \in S(A\otimes_{min}B)$ for some unital C*-algebras $A,B$ such that 
for any state $\varphi_{A'B'}\in S(M_n(\mathbb{C})\otimes M_n(\mathbb{C}))$ and $g\otimes h\in A'\otimes B'= M_n(\mathbb{C})\otimes M_n(\mathbb{C})$
\begin{equation}\label{formula_12uu}
\Lambda_{xy}(\varphi_{A'B'})(g\otimes h)=\sum_{i,j,k,l,p,r,s,t}\varphi_{A'B'}(E_{ij}gE_{kl}\otimes E_{pr}hE_{st})\tilde{\phi}(\tilde{u}^x_{ij}(\tilde{u}^x_{lk})^*\otimes \tilde{v}^y_{pr}(\tilde{v}^y_{ts})^*),
\end{equation}where $U^x=(\tilde{u}^x_{ij})_{i,j}\in M_n(A)$ and $V^y=(\tilde{v}^y_{kl})_{k,l}\in M_n(B)$ are unitaries for any $x,y=1,\ldots, m$.
\item  There exists a state $\phi \in S((*_{x=1}^m\mathcal{U}^x(n))\otimes_{min}(*_{y=1}^m\mathcal{U}^y(n)))$ such that for any state $\varphi_{A'B'}\in S(M_n(\mathbb{C})\otimes M_n(\mathbb{C}))$ and $g\otimes h\in A'\otimes B'= M_n(\mathbb{C})\otimes M_n(\mathbb{C})$
\begin{equation}\label{formula_22uu}
\Lambda_{xy}(\varphi_{A'B'})(g\otimes h)=\sum_{i,j,k,l,p,r,s,t}\varphi_{A'B'}(E_{ij}gE_{kl}\otimes E_{pr}hE_{st})\phi(u^x_{ij}(u^x_{lk})^*\otimes v^y_{pr}(v^y_{ts})^*),
\end{equation}where for any $x,y=1,\ldots, m$ elements $\left\{u^x_{ij}\right\}_{i,j=1}^{n}$ and $\left\{v^y_{kl}\right\}_{k,l=1}^{n}$ stand respectively for generators of the $x$-th copy of Brown algebra $\mathcal{U}^x(n)\subset *_{x=1}^m\mathcal{U}^x(n)$ and generators of the $y$-th copy of Brown algebra $\mathcal{U}^y(n)\subset *_{y=1}^m\mathcal{U}^y(n)$.
\end{enumerate}
\end{theorem}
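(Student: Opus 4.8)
The plan is to mirror the proof of Theorem \ref{max_2} almost verbatim, replacing every occurrence of the maximal tensor product by the minimal one and invoking the corresponding functoriality property of $\otimes_{min}$ in place of that of $\otimes_{max}$. As before, the implications $(1)\Rightarrow(2)$ and $(3)\Rightarrow(1)$ are the concrete, representation-theoretic directions, while $(2)\Rightarrow(3)$ is the abstract one where the universal and free-product machinery enters; I single out this last step as the only place where the argument genuinely differs from the commuting case.

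For $(1)\Rightarrow(2)$, I would start from Definition \ref{def_qaxy} and expand each unitary as $U^x=\sum_{i,j}E_{ij}\otimes\tilde u^x_{ij}$ with $\tilde u^x_{ij}\in B(H_A)$ and $V^y=\sum_{k,l}E_{kl}\otimes\tilde v^y_{kl}$ with $\tilde v^y_{kl}\in B(H_B)$, pushing the index sum through $\phi$ exactly as in (\ref{id}); here no commutation rearrangement is needed, because the $A$- and $B$-entries already act on distinct tensor legs of $H_A\otimes H_B$. Setting $A=C^*(\{\tilde u^x_{ij}\})\subset B(H_A)$ and $B=C^*(\{\tilde v^y_{kl}\})\subset B(H_B)$, the key point is that the spatial (minimal) tensor product $A\otimes_{min}B$ embeds canonically into $B(H_A)\otimes_{min}B(H_B)\subset B(H_A\otimes H_B)$, so that each summand $\tilde u^x_{ij}(\tilde u^x_{lk})^*\otimes\tilde v^y_{pr}(\tilde v^y_{ts})^*$ is literally the image of an element of $A\otimes_{min}B$. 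Restricting, $\tilde\phi:=\phi|_{A\otimes_{min}B}\in S(A\otimes_{min}B)$ is the required state and (\ref{formula_12uu}) follows.

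For $(2)\Rightarrow(3)$ I would proceed as in Theorem \ref{max_2}: the universal property of the Brown algebra yields unital $*$-homomorphisms $\Phi^x_A:\mathcal{U}^x(n)\to A$ and $\Phi^y_B:\mathcal{U}^y(n)\to B$ with $\Phi^x_A(u^x_{ij})=\tilde u^x_{ij}$ and $\Phi^y_B(v^y_{kl})=\tilde v^y_{kl}$, and a recursive application of Theorem \ref{boca} assembles these into unital completely positive maps $\Phi_A:*_{x=1}^m\mathcal{U}^x(n)\to A$ and $\Phi_B:*_{y=1}^m\mathcal{U}^y(n)\to B$. The single place where the minimal case departs from the maximal one — and the step I expect to be the crux — is the tensoring of $\Phi_A$ and $\Phi_B$: since the free-product extensions provided by Theorem \ref{boca} are only unital completely positive (not $*$-homomorphisms), I must invoke the fact that the minimal tensor product of two completely positive maps is again completely positive (see \cite{Brown}), producing a unital completely positive map $\Phi:(*_{x=1}^m\mathcal{U}^x(n))\otimes_{min}(*_{y=1}^m\mathcal{U}^y(n))\to A\otimes_{min}B$ with $\Phi(c\otimes d)=\Phi_A(c)\otimes\Phi_B(d)$. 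Then $\phi:=\tilde\phi\circ\Phi$ satisfies (\ref{formula_22uu}).

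Finally, for $(3)\Rightarrow(1)$, I would fix faithful representations of $*_{x=1}^m\mathcal{U}^x(n)$ on $H_A$ and of $*_{y=1}^m\mathcal{U}^y(n)$ on $H_B$, so that by the very definition of the spatial tensor product $(*_{x=1}^m\mathcal{U}^x(n))\otimes_{min}(*_{y=1}^m\mathcal{U}^y(n))$ acts faithfully on $H_A\otimes H_B$. Under these representations the generators assemble into genuine unitaries $U^x=(u^x_{ij})_{i,j}\in M_n(B(H_A))$ and $V^y=(v^y_{kl})_{k,l}\in M_n(B(H_B))$ acting on the two legs, and extending $\phi$ to a state $\psi\in S(B(H_A\otimes H_B))$ (a state on a unital C*-subalgebra always extends) recovers exactly the form in Definition \ref{def_qaxy}, whence $\left\{\Lambda_{xy}\right\}_{x,y}\in\mathcal{L}_{qa}(m,n)$. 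Thus the only conceptual obstacle throughout is the one flagged above, namely ensuring that complete positivity survives the minimal tensor product of the free-product maps; every remaining ingredient is simply the spatial/concrete counterpart of the maximal argument already carried out in Theorem \ref{max_2} and Theorem \ref{min}.
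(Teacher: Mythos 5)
Your proposal is correct and follows essentially the same route the paper intends: the paper omits the proof of this theorem, explicitly deferring to the argument of Theorem \ref{max_2} with the maximal tensor product replaced by the minimal one, which is precisely what you carry out. Your identification of the one genuinely new ingredient --- that the minimal tensor product of the unital completely positive maps obtained from Theorem \ref{boca} is again unital completely positive --- is the correct point of care, and the remaining steps (restriction of the state in $(1)\Rightarrow(2)$, faithful spatial representation and state extension in $(3)\Rightarrow(1)$) match the paper's scheme.
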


\vspace{0,5cm}

Observe that since states $\varphi \in S(M_n(\mathbb{C})\otimes M_n(\mathbb{C}))$ are in bijective correspondence with density matrices $\rho\in (M_n(\mathbb{C})\otimes M_n(\mathbb{C}))_+$ one can (abusing notation) treat maps $\Lambda_{xy}$ considered above as quantum channels, i.e. completely positive and trace preserving maps $\Lambda_{xy}:M_n(\mathbb{C})\otimes M_n(\mathbb{C})\rightarrow M_n(\mathbb{C})\otimes M_n(\mathbb{C})$. Indeed, consider a density matrix $\rho=\sum_{a,b,c,d}\rho_{abcd}E_{ab}\otimes E_{cd}$ such that $\varphi(\cdot)=\mathrm{Tr}(\rho\ \cdot)$, then  $\Lambda_{xy}(\varphi)(\cdot)=\mathrm{Tr}(\Lambda_{xy}(\rho)\ \cdot)$ where (according to Theorem \ref{max_2} and Theorem \ref{min_2})
\begin{equation}\label{channel_form}
\Lambda_{xy}(\rho)=\sum_{j,k,r,s}\left[\sum_{i,l,p,t} \phi(u_{ij}^x(u_{lk}^x)^*\otimes v_{pr}^y(v_{ts}^y)^*)\rho_{litp}\right] E_{kj}\otimes E_{sr}
\end{equation}defines a completely positive and trace preserving map. Therefore, we will interchangeably use both meaning of $\Lambda_{xy}$ (or $\mathcal{L}_{qc/qa}(m,n)$) without explicitly stating it.

Before we discuss relationship of introduced paradigms with Tsirelson's problem, let us also consider a relaxed (state-dependent) scenario, when instead of comparison between generalized channels $\left\{\Lambda_{xy}\right\}_{x,y}$ one consider comparison between families of states (density matrices) $\left\{\Lambda_{xy}(\rho)\right\}_{x,y}$ obtainable from a fixed initial state (density matrix) $\rho$. This paradigm boils down to analysis of the following classes
\begin{equation}
I_{qc}(m,\rho)=\left\{\left\{\Lambda_{xy}(\rho)\right\}_{x,y}:\left\{\Lambda_{xy}\right\}_{x,y}\in \mathcal{L}_{qc}(m,n)\right\},
\end{equation}
\begin{equation}
I_{qa}(m,\rho)=\left\{\left\{\Lambda_{xy}(\rho)\right\}_{x,y}:\left\{\Lambda_{xy}\right\}_{x,y}\in \mathcal{L}_{qa}(m,n)\right\},
\end{equation}where we introduce a simplified notation $I_{qc}(\rho)=I_{qc}(1,\rho)$, $I_{qa}(\rho)=I_{qa}(1,\rho)$. Note that in particular $I_{qc}(m,\rho)\neq I_{qa}(m,\rho)$ may be seen as a witness of $\mathcal{L}_{qc}(m,n)\neq \mathcal{L}_{qc}(m,n)$.

Consider now an ancillary (input) state $\rho_{A'B'}$, a fixed resource state $\tilde{\rho}_{AB}$ such that 
$\rho_{A'B'},\tilde{\rho}_{AB}\in (M_n(\mathbb{C})\otimes M_n(\mathbb{C}))_+$ and unitaries $U_{A'A},V_{BB'}=U_{SWAP}$, where $U_{SWAP}$ stands for the swap operator. Note that a unitary induced channel $\Lambda$ (in a quantum tensor model) given by that data acts on $\rho_{A'B'}$ in the following way 
\begin{equation}
\Lambda(\rho_{A'B'})=\mathrm{Tr}_{AB}((U_{A'A}^*\otimes V_{BB'}^*)\rho_{A'B'}\otimes\tilde{\rho}_{AB}(U_{A'A}\otimes V_{BB'}))=\tilde{\rho}_{A'B'},
\end{equation}so clearly $I_{qc}(\rho)=I_{qa}(\rho)$ for any choice of $\rho$, as tensor model enable us to construct unitary induced channels transforming any fixed state to any fixed state. We will return to this simple example in Remark \ref{rem} and its following discussion.

\section{Connections with Tsirelson's conjecture}
 In the remaining part of the manuscript (based on characterisation in terms of Brown algebras $\mathcal{U}(n)$) we will show a connection between original Tsirelson's conjecture and generalized unitary induced channels.

In order to do this, we recall well-known notions of local lifting property and weak expectation property. Let $A$ be a C*-algebra. We will say that $A$ has LLP (i.e. $A$ has a local lifting property) if and only if $(A,B(\ell_2))$ is a nuclear pair, i.e. $A\otimes_{max}B(\ell_2)=A\otimes_{min}B(\ell_2)$ \cite{Pisier} (here $\ell_2$ stands for a separable Hilbert space). Similarly, we will say that $A$ has WEP (i.e. $A$ has weak expectation property) if and only if  $(A,C^*(\mathbb{F}_\infty))$ is a nuclear pair, where $C^*(\mathbb{F}_\infty)$ stands for full group C*-algebra of the free group $\mathbb{F}_\infty$ with countably many generators \cite{Pisier}.

The next theorem summarizes a collection of partial results presented in \cite{Fritz, Junge, Harris1, Kirchberg, Ozawa}  (see also \cite{Pisier}) and provides a bridge between Tsirelson's problem and (some of) equivalent statements stated in the language of operator algebras.

\begin{theorem}\label{Tsirelson}The following statements are equivalent
\begin{enumerate}
\item Connes' embedding conjecture is true.
\item Tsirelson's conjecture is true (i.e. $\mathcal{C}_{qa}(m,n)=
\mathcal{C}_{qc}(m,n)$ for all numbers of measurement settings $m\geq 2$ and outcomes $n\geq 2$).
\item  $\mathcal{C}_{qa}(m,2)=
\mathcal{C}_{qc}(m,2)$ for all $m\geq 2$.
\item $\mathcal{U}(n)\otimes_{max}\mathcal{U}(n)=\mathcal{U}(n)\otimes_{min}\mathcal{U}(n)$ for all $n\geq 2$.
\item $\mathcal{U}(n)\otimes_{max}\mathcal{U}(n)=\mathcal{U}(n)\otimes_{min}\mathcal{U}(n)$ for some $n\geq 2$.
\item Any C*-algebra $A$ with LLP has WEP.
\end{enumerate}
\end{theorem}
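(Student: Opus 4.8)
The plan is to treat this theorem as an assembly of several known equivalences, closing a cycle through a small number of links while citing the established hard analytic inputs. The classical part of the chain concerns statements $(1)$, $(2)$, $(3)$ and $(6)$, whereas the content that is genuinely pertinent to the present setting lies in tying the Brown-algebra conditions $(4)$ and $(5)$ into this chain. Accordingly, I would organize the argument as: first the classical links among $(1)$, $(2)$, $(3)$, $(6)$; then the reduction of $(4)$ and $(5)$ to a single equality for $C^*(\mathbb{F}_\infty)$; and finally the observation that this single equality is exactly the Kirchberg form of $(1)$.

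First I would dispatch the classical links. The equivalence $(1)\Leftrightarrow(6)$ is Kirchberg's theorem \cite{Kirchberg}: Connes' embedding conjecture holds if and only if every C*-algebra with LLP has WEP, which is one of Kirchberg's standard equivalent forms (together with the statement that $C^*(\mathbb{F}_\infty)$ itself has WEP). The equivalence $(1)\Leftrightarrow(2)$ is the operator-algebraic resolution of Tsirelson's problem, assembled from \cite{Fritz, Junge, Ozawa}: one direction produces quantum commuting correlations from a tracial, hyperfinite-embeddable von Neumann algebra, and the other encodes a failure of embeddability as a separation of the correlation sets. The implication $(2)\Rightarrow(3)$ is immediate upon restricting to $n=2$; for $(3)\Rightarrow(2)$ I would invoke the known reduction showing that the binary-outcome scenarios already detect any discrepancy between $\mathcal{C}_{qa}$ and $\mathcal{C}_{qc}$, so that equality for all $(m,2)$ forces equality for all $(m,n)$.

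The substantive step is to connect the self-tensor conditions on the Brown algebra to the rest of the chain; here I would follow \cite{Harris1}. The two ingredients are: (a) $\mathcal{U}(n)$ has LLP for every $n\geq 2$, equivalently $(\mathcal{U}(n),B(\ell_2))$ is a nuclear pair, which is established by exhibiting local ucp liftings for the defining unitary relations of Definition \ref{Browndef}; and (b) because $\mathcal{U}(n)$ has LLP, whether $\mathcal{U}(n)\otimes_{max}\mathcal{U}(n)=\mathcal{U}(n)\otimes_{min}\mathcal{U}(n)$ holds is governed by whether $\mathcal{U}(n)$ pairs nuclearly with itself, which I would relate, through ucp maps connecting the free unitary generators of $\mathcal{U}(n)$ with those of $C^*(\mathbb{F}_\infty)$, to the single equality $C^*(\mathbb{F}_\infty)\otimes_{max}C^*(\mathbb{F}_\infty)=C^*(\mathbb{F}_\infty)\otimes_{min}C^*(\mathbb{F}_\infty)$. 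Since the latter is precisely the Kirchberg form of $(1)$, this yields $(4)\Leftrightarrow(1)$ and, crucially, $(5)\Rightarrow(1)$, so that one value $n\geq 2$ already suffices; together with the trivial $(4)\Rightarrow(5)$ the cycle closes.

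The hard part will be ingredient (b): establishing the LLP of $\mathcal{U}(n)$ cleanly and, above all, showing that the self-nuclearity of the Brown algebra for a \emph{fixed} finite $n$ is equivalent to that of $C^*(\mathbb{F}_\infty)$, so that $(5)$ (the existence of one good $n$) is genuinely no weaker than $(4)$ (all $n$). This is the one point where the universal/free structure of $\mathcal{U}(n)$ must be exploited rather than invoked formally; everything else reduces to bookkeeping of nuclear pairs and to the known Tsirelson–Connes correspondence. I would therefore isolate this transfer as the technical core, drawing on \cite{Harris1}, and present the remaining implications as routine consequences of the cited equivalences.
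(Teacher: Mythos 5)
Your proposal is correct and coincides with the paper's treatment of this theorem: the paper gives no proof, explicitly presenting Theorem \ref{Tsirelson} as a summary of the partial results in \cite{Kirchberg, Fritz, Junge, Ozawa, Harris1, Pisier}, which is exactly the assembly you describe (Kirchberg for $(1)\Leftrightarrow(6)$, Fritz--Junge et al.--Ozawa for $(1)\Leftrightarrow(2)\Leftrightarrow(3)$, and Harris for tying $(4)$ and $(5)$ to Kirchberg's conjecture via the LLP of $\mathcal{U}(n)$). Your identification of the transfer between self-nuclearity of $\mathcal{U}(n)$ for a fixed $n$ and that of $C^*(\mathbb{F}_\infty)$ as the technical core is accurate; that is precisely the content of \cite{Harris1}.
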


\begin{remark}\label{rem}
Interestingly, refutation of Tsirelson's conjecture \cite{Re} and the above theorem show that $I_{qa}(\rho)=I_{qc}(\rho)$ (for all density matrices $\rho\in M_n(\mathbb{C})\otimes M_n(\mathbb{C})_+$) despite the fact that (due to Theorem \ref{max} and Theorem \ref{min}) characterisation of both sets differs only by state spaces $S(\mathcal{U}(n)\otimes_{max}\mathcal{U}(n))$, $S(\mathcal{U}(n)\otimes_{min}\mathcal{U}(n))$, while $\mathcal{U}(n)\otimes_{max}\mathcal{U}(n)\neq \mathcal{U}(n)\otimes_{min}\mathcal{U}(n)$.
\end{remark}

Following this remark, it is in general non-trivial to consider the following operator systems (see \cite{Kavruk2} for an introduction to the theory of operator systems)

\begin{equation}
\mathcal{F}_{qc}^\rho=\mathrm{span}\left(\left\{\mathds{1}\otimes \mathds{1}\right\}\cup \left\{\sum_{l,i,t,p}\rho_{litp} u_{ij}u_{lk}^*\otimes v_{pr}v_{ts}^*:j,k,r,s=1,\ldots, n\right\} \right)\subset \mathcal{U}(n)\otimes_{max}\mathcal{U}(n),
\end{equation}

\begin{equation}
\mathcal{F}_{qa}^\rho=\mathrm{span}\left(\left\{\mathds{1}\otimes \mathds{1}\right\}\cup \left\{\sum_{l,i,t,p}\rho_{litp} u_{ij}u_{lk}^*\otimes v_{pr}v_{ts}^*:j,k,r,s=1,\ldots, n\right\} \right)\subset \mathcal{U}(n)\otimes_{min}\mathcal{U}(n),
\end{equation}defined for any density matrix $\rho=\sum_{a,b,c,d}\rho_{abcd}E_{ab}\otimes E_{cd}\in (M_n(\mathbb{C})\otimes M_n(\mathbb{C}))_+$ (as $\mathcal{F}_{qc}^\rho$ and $\mathcal{F}_{qa}^\rho$ may not be complete order isomorphic). Formula (\ref{channel_form}) together with equality $I_{qa}(\rho)=I_{qc}(\rho)$ enable us to immediately compare state spaces of both operator systems, throught the following corollary.

\begin{corollary}For any $n\geq 2$ and any density matrix $\rho\in (M_n(\mathbb{C})\otimes M_n(\mathbb{C}))_+$ we have $S(\mathcal{F}_{qc}^{\rho})=S(\mathcal{F}_{qa}^{\rho})$.
\end{corollary}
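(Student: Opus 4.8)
The plan is to realize the state space of each operator system as an affine copy of the corresponding family $I_{qc}(\rho)$ or $I_{qa}(\rho)$, and then to invoke the equality $I_{qc}(\rho)=I_{qa}(\rho)$ already recorded in Remark \ref{rem}.

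First I would note that a state $\omega$ on $\mathcal{F}_{qc}^\rho$ (and, identically, on $\mathcal{F}_{qa}^\rho$) is completely determined by unitality, $\omega(\mathds{1}\otimes\mathds{1})=1$, together with the scalars $\omega(w_{jkrs})$, where $w_{jkrs}=\sum_{l,i,t,p}\rho_{litp}\,u_{ij}u_{lk}^*\otimes v_{pr}v_{ts}^*$ are the spanning elements of both operator systems. I would then introduce the affine map $\sigma$ assigning to $\omega$ the matrix $\sigma(\omega)=\sum_{j,k,r,s}\omega(w_{jkrs})\,E_{kj}\otimes E_{sr}\in M_n(\mathbb{C})\otimes M_n(\mathbb{C})$. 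Since $\{E_{kj}\otimes E_{sr}\}_{j,k,r,s}$ is a basis and unitality fixes the value on the identity, $\sigma$ is injective on states; crucially, $\sigma$ refers only to the formal labels $j,k,r,s$ and not to the ambient algebra, so one and the same $\sigma$ is applied in both paradigms.

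Next I would compute the image of $\sigma$. Every state of an operator subsystem extends to a state of the ambient C*-algebra (Arveson's extension theorem) and, conversely, restricts from one; hence $S(\mathcal{F}_{qc}^\rho)$ is exactly the set of restrictions to $\mathcal{F}_{qc}^\rho$ of states $\phi\in S(\mathcal{U}(n)\otimes_{max}\mathcal{U}(n))$. Comparing $\sigma(\phi|_{\mathcal{F}_{qc}^\rho})=\sum_{j,k,r,s}\phi(w_{jkrs})E_{kj}\otimes E_{sr}$ with the channel formula (\ref{channel_form}) and Theorem \ref{max}, this restriction is precisely $\Lambda(\rho)$ for the unitary induced channel $\Lambda\in\mathcal{L}_{qc}(n)$ determined by $\phi$. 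Therefore $\sigma(S(\mathcal{F}_{qc}^\rho))=I_{qc}(\rho)$, and the verbatim argument with the minimal tensor product and Theorem \ref{min} gives $\sigma(S(\mathcal{F}_{qa}^\rho))=I_{qa}(\rho)$.

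To conclude, Remark \ref{rem} supplies $I_{qc}(\rho)=I_{qa}(\rho)$, whence $\sigma(S(\mathcal{F}_{qc}^\rho))=\sigma(S(\mathcal{F}_{qa}^\rho))$, and injectivity of $\sigma$ forces $S(\mathcal{F}_{qc}^\rho)=S(\mathcal{F}_{qa}^\rho)$. I expect the only delicate bookkeeping to be the consistency of the correspondence: a functional arising as a genuine state in one paradigm must respect whatever linear relations the generators obey in the other. This is automatic here, because membership in $I_{qc}(\rho)=I_{qa}(\rho)$ is witnessed by an actual extending state on each ambient algebra, so no independent verification is needed. No deeper obstacle appears, reflecting the fact that the scalar (first-level) state spaces cannot detect the possible failure of complete order isomorphism between $\mathcal{F}_{qc}^\rho$ and $\mathcal{F}_{qa}^\rho$.
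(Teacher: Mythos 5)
Your proposal is correct and follows essentially the same route as the paper: the paper's (one-line) justification is precisely that formula (\ref{channel_form}) identifies the restriction of an ambient state to $\mathcal{F}_{qc/qa}^{\rho}$ with the output matrix $\Lambda(\rho)$, so that the two state spaces are parametrized by $I_{qc}(\rho)$ and $I_{qa}(\rho)$, whose equality is supplied by Remark \ref{rem}. Your write-up merely makes explicit the affine parametrization $\sigma$, its injectivity on states, and the extension/restriction step, all of which the paper leaves implicit.
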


After this interlude related to the previous remark, we are ready to introduce the main theorem.

\begin{theorem}\label{thm_main_2}The following statements are equivalent  
\begin{enumerate}
\item Tsirelson's conjecture is true.
\item $\mathcal{L}_{qc}(m,n)=\mathcal{L}_{qa}(m,n)$ for all $n,m$.
\item For any $m,n$ and any $\left\{\Lambda_{xy}\right\}_{x,y}\in \mathcal{L}_{qc}(m,n)$ there exists $\left\{\Lambda'_{xy}\right\}_{x,y}\in \mathcal{L}_{qa}(m,n)$ such that $(\Lambda_{xy}(E_{kj}\otimes E_{sr}))_{kjsr}=(\Lambda'_{xy}(E_{kj}\otimes E_{sr}))_{kjsr}$ for all $k,j,s,r=1,\ldots, n$ and $x,y=1,\ldots, m$.
\item $\mathcal{L}_{qc}(m,2)=\mathcal{L}_{qa}(m,2)$ for all $m$.
\item For any $m$ and any $\left\{\Lambda_{xy}\right\}_{x,y}\in \mathcal{L}_{qc}(m,2)$ there exists $\left\{\Lambda'_{xy}\right\}_{x,y}\in \mathcal{L}_{qa}(m,2)$ such that $(\Lambda_{xy}(E_{kj}\otimes E_{sr}))_{kjsr}=(\Lambda'_{xy}(E_{kj}\otimes E_{sr}))_{kjsr}$ for all $k,j,s,r=1,2$ and $x,y=1,\ldots, m$.
\end{enumerate}
\end{theorem}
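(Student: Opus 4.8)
The plan is to establish the cycle $(1)\Rightarrow(2)\Rightarrow(4)\Rightarrow(1)$ together with the formal equivalences $(2)\Leftrightarrow(3)$ and $(4)\Leftrightarrow(5)$. The equivalences are immediate: a completely positive trace-preserving map on $M_n(\mathbb{C})\otimes M_n(\mathbb{C})$ is determined by its values on the matrix units $E_{kj}\otimes E_{sr}$, so the condition in $(3)$ (resp. $(5)$) that a $qc$-channel agree with a $qa$-channel on all such inputs says exactly that the two maps coincide; combined with the trivial inclusion $\mathcal{L}_{qa}(m,n)\subseteq\mathcal{L}_{qc}(m,n)$—which holds because the canonical surjection $\,\cdot\otimes_{max}\cdot\to\,\cdot\otimes_{min}\cdot$ pulls states back, turning every $qa$-realization into a $qc$-realization—this yields $(2)\Leftrightarrow(3)$ and $(4)\Leftrightarrow(5)$, while $(2)\Rightarrow(4)$ is specialization to $n=2$. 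Note that, by Remark \ref{rem}, fixing a single input state does not distinguish the two models, so the argument below must genuinely use the family indexed by $x,y$.

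For $(1)\Rightarrow(2)$ I would work through Theorems \ref{max_2} and \ref{min_2}. The Brown algebra $\mathcal{U}(n)$ has the LLP (see \cite{Harris1,Pisier}), and the LLP is preserved under unital full free products, so each of $*_{x=1}^m\mathcal{U}^x(n)$ and $*_{y=1}^m\mathcal{U}^y(n)$ has the LLP. Assuming Tsirelson's conjecture, equivalently statement $(6)$ of Theorem \ref{Tsirelson}, every C*-algebra with LLP has WEP; hence $*_{x=1}^m\mathcal{U}^x(n)$ has WEP, and by Kirchberg's theorem a WEP algebra forms a nuclear pair with any LLP algebra, giving $(*_x\mathcal{U}^x(n))\otimes_{max}(*_y\mathcal{U}^y(n))=(*_x\mathcal{U}^x(n))\otimes_{min}(*_y\mathcal{U}^y(n))$. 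The two algebras then share the same state space, and Theorems \ref{max_2}, \ref{min_2} identify $\mathcal{L}_{qc}(m,n)$ with $\mathcal{L}_{qa}(m,n)$ for all $m,n$.

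The substantial direction is $(4)\Rightarrow(1)$, which I would reduce to statement $(3)$ of Theorem \ref{Tsirelson} by encoding correlations into $n=2$ channels. Given $p\in\mathcal{C}_{qc}(m,2)$ realized by commuting PVMs $P_{a|x},Q_{b|y}$ and a vector state, fix unitaries $V_1=\mathds{1}$, $V_2=\mathrm{diag}(1,-1)$ (and $W_b$ likewise), whose vectorizations are orthogonal, and set $U^x=\sum_a V_a\otimes P_{a|x}$, $V^y=\sum_b W_b\otimes Q_{b|y}$. These are unitaries in $M_2(B(H))$ with entrywise commuting blocks, so they define $\{\Lambda_{xy}\}\in\mathcal{L}_{qc}(m,2)$, and one computes the channel data to be $\sum_{a,b}(V_a)_{ij}\overline{(V_a)_{lk}}(W_b)_{pr}\overline{(W_b)_{ts}}\,p(ab|xy)$. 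By $(4)$ the same family lies in $\mathcal{L}_{qa}(m,2)$, so Theorem \ref{min_2} furnishes $\tilde{\phi}\in S(A\otimes_{min}B)$ and unitaries $U^x=(\tilde{u}^x_{ij})\in M_2(A)$, $V^y=(\tilde{v}^y_{pr})\in M_2(B)$ producing identical numbers. Contracting these numbers against fixed coefficient matrices $M_a,M_b$ with $\langle\mathrm{vec}(V_{a'})|M_a|\mathrm{vec}(V_{a'})\rangle=\delta_{aa'}$ reproduces $p$ from the left-hand data, hence the same contraction of the right-hand data equals $p$ and writes $p(ab|xy)=\tilde{\phi}(\tilde{P}_{a|x}\otimes\tilde{Q}_{b|y})$ with $\tilde{P}_{a|x}=\sum(M_a)_{(ij),(lk)}\tilde{u}^x_{ij}(\tilde{u}^x_{lk})^*\in A$ and analogously $\tilde{Q}_{b|y}\in B$.

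The main obstacle is that for a general tensor-side unitary the recovered $\tilde{P}_{a|x}$ need be neither positive nor a resolution of the identity, so $p$ is a priori only exhibited in a bare tensor-product form rather than as an element of $\mathcal{C}_{qa}(m,2)$. I would remove this by exploiting the freedom in $M_a$: since the rank-one matrices $|\mathrm{vec}(V_a)\rangle\langle\mathrm{vec}(V_a)|$ span only a two-dimensional subspace of $M_4(\mathbb{C})$, among all admissible choices one may take $M_a=\tfrac14|\mathrm{vec}(V_a)\rangle\langle\mathrm{vec}(V_a)|+\tfrac14 P_W$, with $P_W$ the projection onto $\mathrm{span}\{\mathrm{vec}(V_a)\}^{\perp}$, so that each $M_a$ is positive semidefinite and $\sum_a M_a=\tfrac12\mathds{1}$. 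Positivity of $M_a$ forces $\tilde{P}_{a|x}\ge 0$ through a factorization $M_a=\sum_\alpha|w^\alpha\rangle\langle w^\alpha|$, while $\sum_a M_a\propto\mathds{1}$ together with the row-orthonormality relations of the Brown algebra forces $\sum_a\tilde{P}_{a|x}=\mathds{1}$; the same holds on the $B$-side. Thus $\tilde{P}_{a|x},\tilde{Q}_{b|y}$ are genuine POVMs, and $p(ab|xy)=\tilde{\phi}(\tilde{P}_{a|x}\otimes\tilde{Q}_{b|y})$ with $\tilde{\phi}\in S(A\otimes_{min}B)$ exhibits $p\in\mathcal{C}_{qa}(m,2)$ after a local Naimark dilation. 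Hence $(4)$ yields $\mathcal{C}_{qc}(m,2)=\mathcal{C}_{qa}(m,2)$ for all $m$, which is statement $(3)$ of Theorem \ref{Tsirelson} and therefore Tsirelson's conjecture, closing the cycle.
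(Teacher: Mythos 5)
Your overall architecture is sound and your $(1)\Rightarrow(2)$ is exactly the paper's argument (LLP of $\mathcal{U}(n)$, preservation under free products, LLP$\,\Rightarrow\,$WEP from Tsirelson via Theorem \ref{Tsirelson}, Kirchberg's nuclear-pair theorem). Your key direction is also in the same spirit as the paper's: the paper encodes the PVMs $P_{a|x}$ into diagonal unitaries via a discrete Fourier transform, which for $n=2$ is precisely your $U^x=\sum_a V_a\otimes P_{a|x}$ with $V_1=\mathds{1}$, $V_2=\mathrm{diag}(1,-1)$, and then recovers the correlation by contracting the channel moments against the Fourier coefficients.

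There is, however, a genuine gap in how you handle conditions $(3)$ and $(5)$. The quantity $(\Lambda_{xy}(E_{kj}\otimes E_{sr}))_{kjsr}$ is a \emph{scalar} for each fixed $(k,j,s,r)$ --- the $(kj,sr)$ matrix entry of the output on input $E_{kj}\otimes E_{sr}$, which by formula (\ref{channel_form}) equals $\phi(u^x_{jj}(u^x_{kk})^*\otimes v^y_{rr}(v^y_{ss})^*)$. (This reading is forced by the paper's final Corollary, which sums these quantities over $k,j,s,r$ against scalar coefficients.) So $(3)$ and $(5)$ do \emph{not} say that the two channels coincide; they say only that the ``diagonal-generator'' moments agree, and the off-diagonal moments $\phi(u^x_{ij}(u^x_{lk})^*\otimes\cdots)$ with $i\neq j$ or $l\neq k$ are completely uncontrolled. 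Consequently your claimed equivalences $(2)\Leftrightarrow(3)$ and $(4)\Leftrightarrow(5)$ are not immediate, and your cycle never returns from $(5)$, the weakest statement, to $(1)$. The gap bites concretely in your choice $M_a=\tfrac14|\mathrm{vec}(V_a)\rangle\langle\mathrm{vec}(V_a)|+\tfrac14 P_W$: the $P_W$ summand is supported on the off-diagonal positions $(12),(21)$, so the resulting $\tilde P_{a|x}$ involves exactly the moments that $(5)$ does not determine. If you drop $P_W$ to stay within the diagonal data (as $(5)$ requires), you get positive operators with $\sum_a\tilde P_{a|x}=\tfrac12\sum_a\tilde u^x_{aa}(\tilde u^x_{aa})^*\leq\mathds{1}$ only, and you then need the paper's additional step: complete the sub-POVM by adding the defect $\mathds{1}-\sum_a\tilde P_{a|x}$ to the last element and use the normalization $\sum_{a,b}p(ab|xy)=1$ to show this completion does not change the recovered probabilities. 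With that repair (and proving $(5)\Rightarrow(1)$ directly rather than via $(5)\Rightarrow(4)$), your argument closes; as written it does not.
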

\begin{proof}
Observe that implications $(2)\Rightarrow (3)$, $(2) \Rightarrow (4)$, $(3)\Rightarrow  (5)$ and $(4)\Rightarrow  (5)$ are obvious.

We will show that $(1) \Rightarrow (2)$. Let us assume that Tsirelson's problem admits a positive solution. For any $m,n$ consider $\left\{\Lambda_{xy}\right\}_{x,y}\in \mathcal{L}_{qc}(m,n)$, then $\left\{\Lambda_{xy}\right\}_{x,y}$ can be repented according to point $(3)$ of Theorem \ref{max_2}. Since Brown algebra $\mathcal{U}(n)$ admits LLP \cite{Harris1}, due to Proposition 3.21 in \cite{Ozawa0} (based on \cite{Pisier96}), free product $*_{x=1}^m \mathcal{U}(n)$ of $m$ copies of $\mathcal{U}(n)$ has LLP as well. By the initial assumption and Theorem \ref{Tsirelson}, $*_{x=1}^m \mathcal{U}(n)$ also has WEP. As $A\otimes_{max}B=A\otimes_{min}B$ for all C*-algebras $A,B$ such  that $A$ has LLP and $B$ has WEP \cite{Pisier}, we infer that $\left\{\Lambda_{xy}\right\}_{x,y}\in \mathcal{L}_{qa}(m,n)$ due to Theorem \ref{min_2}. As on the other hand, $\mathcal{L}_{qa}(m,n)\subset \mathcal{L}_{qc}(m,n)$, the reasoning is completed.

To show $(5) \Rightarrow  (2)$, one can assume that $n=2$, although the reasoning remains the same for arbitrary $n$, and for that reason we will not restrict our notation to this specific case.

Consider generators $P_{a|x}\in C_A^*(m,n)$ and the following unitaries 
\begin{equation}
u_{a'}^x=\sum_{a=1}^ne^{\frac{2\pi i aa'}{n}}P_{a|x}
\end{equation}for $a'=1,\ldots, n$ and $x=1,\ldots, m$. Using inverse Fourier transform we reconstruct initial projectors as linear combinations of the above unitaries
\begin{equation}\label{fourier}
P_{a|x}=\frac{1}{n}\sum_{a'=1}^ne^{\frac{-2\pi i aa'}{n}}u^x_{a'}=\sum_{a'=1}^n c_{aa'}u^x_{a'}.
\end{equation}Analogously, for generators $Q_{b|y}\in C_B^*(m,n)$ we introduce unitaries $v^y_{b'}$ such that $Q_{b|y}=\sum_{b'}c_{bb'}v^y_{b'}$. 

We will use above unitaries to construct certain maps. Namely, for any $x,y=1,\ldots,m$ consider unital $*$-homomorphisms
$\pi_{x}:\mathcal{U}^x(n)\rightarrow C^*_A(m,n)$, $\tilde{\pi}_{y}:\mathcal{U}^y(n)\rightarrow C^*_B(m,n)$ given on generators $u_{ik}^x$, $v_{jl}^y$ as
\begin{equation}
\pi_{x}(u_{ik}^x)=u_i^x\delta_{ik},
\end{equation}
\begin{equation}
\tilde{\pi}_{y}(v_{jl}^y)=v_j^y\delta_{jl}.
\end{equation}Note that the existence of such maps follows directly from the universal property of the Brown algebra $\mathcal{U}(n)$. According to Theorem \ref{boca} there exist unital completely positive maps $\Psi_A:*_{x=1}^m\mathcal{U}^x(n)\rightarrow C^*_A(m,n)$ and $\Psi_B:*_{y=1}^m\mathcal{U}^y(n)\rightarrow C^*_B(m,n)$ such that $\Psi_A|_{\mathcal{U}^x(n)}=\pi_x$ and $\Psi_B|_{\mathcal{U}^y(n)}=\tilde{\pi}_y$ for all $x,y=1,\ldots,m$. Using relations (\ref{fourier}) for $P_{a|x}\in C^*_A(m,n)$ (and $Q_{b|y}\in C^*_B(m,n)$ respectively) we obtain
\begin{equation}
P_{a|x}=\sum_{a'=1}^nc_{aa'}u^x_{a'}=\sum_{a'=1}^nc_{aa'}\pi_x(u^x_{a'a'})=\pi_x\left(\sum_{a'=1}^nc_{aa'}u^x_{a'a'}\right)=\pi_x\left((\sum_{a'=1}^nc_{aa'}u^x_{a'a'})(\sum_{a'=1}^nc_{aa'}u^x_{a'a'})^*\right)
\end{equation}and analogously
\begin{equation}
Q_{b|y}=\tilde{\pi}_y\left((\sum_{b'=1}^nc_{bb'}v^y_{b'b'})(\sum_{b'=1}^nc_{bb'}v^y_{b'b'})^*\right).
\end{equation}Define positive operators $M_{a|x}=(\sum_{a'}c_{aa'}u^x_{a'a'})(\sum_{a'}c_{aa'}u^x_{a'a'})^*$ and $N_{b|y}=(\sum_{b'}c_{bb'}v^y_{b'b'})(\sum_{b'}c_{bb'}v^y_{b'b'})^*$.
Observe that due to particular form of coefficients $c_{aa'}$ one can show 
\begin{equation}
 \sum_{a=1}^nM_{a|x}=\frac{1}{n}\sum_{a=1}^nu^x_{aa}(u^x_{aa})^*
\end{equation}and since $\norm{u_{aa}}\leq 1$, so $\norm{ \sum_{a}M_{a|x}}\leq 1$ and $\sum_{a}M_{a|x}\leq \mathds{1}$. Similarly, we obtain $\sum_bN_{b|y}\leq \mathds{1}$.

Consider an arbitrary behaviour $P=\left\{p(ab|xy)\right\}_{a,b,x,y}$ such that $P\in \mathcal{C}_{qc}(m,n)$. Then for some $\varphi\in S(C^*_A(m,n)\otimes_{max}C_B^*(m,n))$ we have
\begin{equation}
p(ab|xy)=\varphi(P_{a|x}\otimes Q_{b|y})=\varphi(\Psi_A(M_{a|x})\otimes \Psi_B(N_{b|y}))=\tilde{\varphi}(M_{a|x}\otimes N_{b|y}),
\end{equation}where $\tilde{\varphi}=\varphi\circ (\Psi_A\otimes \Psi_B)\in S(*_{x=1}^m\mathcal{U}^x(n)\otimes_{max} *_{y=1}^m\mathcal{U}^y(n))$. 

Assume now (point $(5)$) that for any $\left\{\Lambda_{xy}\right\}_{x,y}\in \mathcal{L}_{qc}(m,n)$ there exists $\left\{\Lambda'_{xy}\right\}_{x,y}\in \mathcal{L}_{qa}(m,n)$ such that $(\Lambda_{xy}(E_{kj}\otimes E_{sr}))_{kjsr}=(\Lambda'_{xy}(E_{kj}\otimes E_{sr}))_{kjsr}$. Taking into account formula (\ref{channel_form}) direct calculations show that for any state $\phi \in S(*_{x=1}^m\mathcal{U}^x(n)\otimes_{max} *_{y=1}^m\mathcal{U}^y(n))$ there exists a state $\phi' \in S(*_{x=1}^m\mathcal{U}^x(n)\otimes_{min} *_{y=1}^m\mathcal{U}^y(n))$ such that
\begin{equation}
    \phi(u_{jj}^x(u_{kk}^x)^*\otimes v_{rr}^y(v_{ss}^y)^*)=\phi'(u_{jj}^x(u_{kk}^x)^*\otimes v_{rr}^y(v_{ss}^y)^*)
\end{equation}for all $j,k,s,r=1,\ldots, n$ and $x,y=1,\ldots, m$.

Since $M_{a|x}\in \mathrm{span}(\left\{u^x_{a'a'}(u^x_{b'b'})^*: a',b'=1,\ldots, n\right\})$, $N_{b|y}\in \mathrm{span}(\left\{v^y_{a'a'}(v^y_{b'b'})^*: a',b'=1,\ldots, n\right\})$ for all $a,b=1,\ldots, n$ and $x,y=1,\ldots, m$, there exists a state $\hat{\varphi}\in S(*_{x=1}^m\mathcal{U}^x(n)\otimes_{min} *_{y=1}^m\mathcal{U}^y(n))$ such that
\begin{equation}
p(ab|xy)=\hat{\varphi}(M_{a|x}\otimes N_{b|y}).
\end{equation}
Define POVMs elements $\hat{M}_{a|x}\in *_{x=1}^m\mathcal{U}^x(n)$ and $\hat{N}_{b|y}\in *_{y=1}^m\mathcal{U}^y(n)$ by $\hat{M}_{n|x}=M_{n|x}+(\mathds{1}-\sum_{a=1}^{n-1}M_{a|x})$, $\hat{N}_{n|y}=N_{n|y}+(\mathds{1}-\sum_{b=1}^{n-1}N_{b|y})$
and $\hat{M}_{a|x}=M_{a|x}$, $\hat{N}_{b|y}=N_{b|y}$ for $a,b=1,\ldots, n-1$, while putting $\hat{p}(ab|xy)=\hat{\varphi}(\hat{M}_{a|x}\otimes \hat{N}_{b|y})$. Using normalization $\sum_{a,b}p(ab|xy)=\sum_{a,b}\hat{p}(ab|xy)=1$ one can show that $p(ab|xy)=\hat{p}(ab|xy)=\hat{\varphi}(\hat{M}_{a|x}\otimes \hat{N}_{b|y})$, hence $P\in \mathcal{C}_{qa}(m,n)$. Since the choice of $m,n$ was arbitrary, we conclude that $\mathcal{C}_{qa}(m,n)=\mathcal{C}_{qc}(m,n)$ for all $m,n$ and the proof is completed (according to Theorem \ref{Tsirelson} it is enough to consider $n=2$).
\end{proof}

Note that the above result bears a twofold meaning. Observe that, due the groundbreaking observation \cite{Re} refuting Tsirelson's conjecture, our equivalence enable us to witness a Tsirelson's gap in a protocol which is independent of measurements performed on infinite-dimensional subsystems and relies only on comparison of input and output subsystems (with implicit use of finite-dimensional measurements for state tomography performed on finite-dimensional ancillas). It is (in principal) not necessary to perform measurements on infinite-dimensional systems in order to detect insufficiency of a tensor product framework, i.e.  $\mathcal{L}_{qc}(m,n)\neq \mathcal{L}_{qa}(m,n)$ for some $m,n$.

On the other hand, in the light of Theorem \ref{thm_main_2}, an explicit construction of generalized unitary induced channels within quantum commuting paradigm (for some choice of $m,n$), yet without quantum tensor model, would provide a simpler proof of a negative answer to Tsirelson's question in a form of a concrete counterexample with clearer physical meaning.

We will conclude with yet another equivalence based on the same general idea as a proof of Theorem \ref{thm_main_2}.

\begin{corollary} Tsirelson's conjecture is true if and only if
for any $m,n$ (respectively for any $m$ and $n=2$) and any $\left\{\Lambda_{xy}\right\}_{x,y}\in \mathcal{L}_{qc}(m,n)$ there exists $\left\{\Lambda'_{xy}\right\}_{x,y}\in \mathcal{L}_{qa}(m,n)$ such that 
\begin{equation}\label{last_cond}
\sum_{k,j,s,r}c_{aj}\overline{c_{ak}}c_{br}\overline{c_{bs}}(\Lambda_{xy}(E_{kj}\otimes E_{sr}))_{kjsr}=\sum_{k,j,s,r}c_{aj}\overline{c_{ak}}c_{br}\overline{c_{bs}}(\Lambda'_{xy}(E_{kj}\otimes E_{sr}))_{kjsr}
\end{equation}
 for all $a,b=1,\ldots ,n$ (respectively $a,b=1,2$)  and $x,y=1,\ldots, m$, while $c_{aj}$ are coefficients defined in the proof of Theorem \ref{thm_main_2}.
\end{corollary}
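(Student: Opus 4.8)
The plan is to show that the Fourier-weighted combination appearing in (\ref{last_cond}) is nothing but the expectation value $\phi(M_{a|x}\otimes N_{b|y})$ of the positive operators $M_{a|x}, N_{b|y}$ built in the proof of Theorem \ref{thm_main_2}, so that the displayed condition is exactly the ``measurement-level'' weakening of point $(3)$ (respectively $(5)$) of that theorem which its proof actually uses.

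First I would recall, as established via formula (\ref{channel_form}) in the proof of Theorem \ref{thm_main_2}, that for $\{\Lambda_{xy}\}_{x,y}\in\mathcal{L}_{qc}(m,n)$ represented by a state $\phi$ on the maximal tensor product one has
\[
(\Lambda_{xy}(E_{kj}\otimes E_{sr}))_{kjsr}=\phi(u^x_{jj}(u^x_{kk})^*\otimes v^y_{rr}(v^y_{ss})^*),
\]
and likewise $(\Lambda'_{xy}(E_{kj}\otimes E_{sr}))_{kjsr}=\phi'(u^x_{jj}(u^x_{kk})^*\otimes v^y_{rr}(v^y_{ss})^*)$ for $\{\Lambda'_{xy}\}_{x,y}\in\mathcal{L}_{qa}(m,n)$ with $\phi'$ a state on the minimal tensor product. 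Substituting these into (\ref{last_cond}) and using linearity of $\phi$ together with the definitions $M_{a|x}=(\sum_{a'}c_{aa'}u^x_{a'a'})(\sum_{a'}c_{aa'}u^x_{a'a'})^*$ and $N_{b|y}=(\sum_{b'}c_{bb'}v^y_{b'b'})(\sum_{b'}c_{bb'}v^y_{b'b'})^*$ from the proof of Theorem \ref{thm_main_2}, the left-hand side of (\ref{last_cond}) collapses to $\phi(M_{a|x}\otimes N_{b|y})$ and the right-hand side to $\phi'(M_{a|x}\otimes N_{b|y})$. Thus (\ref{last_cond}) is equivalent to asserting that every state $\phi$ on $(*_{x=1}^m\mathcal{U}^x(n))\otimes_{max}(*_{y=1}^m\mathcal{U}^y(n))$ admits a state $\phi'$ on $(*_{x=1}^m\mathcal{U}^x(n))\otimes_{min}(*_{y=1}^m\mathcal{U}^y(n))$ with $\phi(M_{a|x}\otimes N_{b|y})=\phi'(M_{a|x}\otimes N_{b|y})$ for all $a,b,x,y$.

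With this reduction in hand, the forward direction is immediate: if Tsirelson's conjecture holds then $\mathcal{L}_{qc}(m,n)=\mathcal{L}_{qa}(m,n)$ by Theorem \ref{thm_main_2}, so point $(3)$ of that theorem already equates all diagonal matrix elements of some $\Lambda'_{xy}$ with those of $\Lambda_{xy}$, and (\ref{last_cond}) follows by applying the same linear combination $\sum c_{aj}\overline{c_{ak}}c_{br}\overline{c_{bs}}$ to both sides. For the converse I would run the argument $(5)\Rightarrow(2)$ of Theorem \ref{thm_main_2} verbatim: given $P\in\mathcal{C}_{qc}(m,n)$, write $p(ab|xy)=\tilde\varphi(M_{a|x}\otimes N_{b|y})$ for a state $\tilde\varphi$ on the maximal tensor product, use the equivalence just derived to obtain a state $\hat\varphi$ on the minimal tensor product with the same values, complete $M_{a|x}, N_{b|y}$ to genuine POVMs by $\hat{M}_{n|x}=M_{n|x}+(\mathds{1}-\sum_{a<n}M_{a|x})$ and $\hat{N}_{n|y}=N_{n|y}+(\mathds{1}-\sum_{b<n}N_{b|y})$, and invoke the normalization $\sum_{a,b}p(ab|xy)=1$ to conclude $P\in\mathcal{C}_{qa}(m,n)$; by Theorem \ref{Tsirelson} treating $n=2$ suffices.

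The only point requiring care --- and the sole respect in which (\ref{last_cond}) is weaker than condition $(5)$ of Theorem \ref{thm_main_2} --- is that (\ref{last_cond}) controls merely the aggregated combinations indexed by $a,b$, not the individual diagonal entries indexed by $j,k,r,s$. I expect this to be the main (indeed essentially the only) thing to check, and it causes no difficulty: the proof of $(5)\Rightarrow(2)$ never accesses the entries $\phi(u^x_{jj}(u^x_{kk})^*\otimes v^y_{rr}(v^y_{ss})^*)$ separately, but only through the expectations $\phi(M_{a|x}\otimes N_{b|y})$, which are exactly the quantities (\ref{last_cond}) fixes. Hence the weaker hypothesis is already sufficient to drive the reduction to $\mathcal{C}_{qa}(m,n)=\mathcal{C}_{qc}(m,n)$, and no earlier step needs modification.
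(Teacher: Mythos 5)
Your proposal is correct and follows exactly the route the paper intends: the paper gives no explicit proof of this corollary, only the remark that it rests on the same idea as Theorem \ref{thm_main_2}, and your argument supplies precisely those details --- identifying the weighted sums in (\ref{last_cond}) with $\phi(M_{a|x}\otimes N_{b|y})$ and $\phi'(M_{a|x}\otimes N_{b|y})$ and observing that the proof of $(5)\Rightarrow(2)$ only ever uses these aggregated expectations, never the individual diagonal entries. No gaps; the verification that the hypothesis transfers from channels to states via Theorems \ref{max_2} and \ref{min_2} is handled correctly.
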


\color{black}

\section*{Acknowledgements}
This work was supported by the National Science Centre, Poland through grant Maestro (2021/42/A/ST2/00356). The authors are grateful to Marcin Marciniak, Ryszard Paweł Kostecki and Michał Eckstein for fruitful discussions.

\end{document}